%% file: main.tex
\def\version{LNCS}
\newif \ifsubmission \submissionfalse 
\input{preamble}
\title{\sysname: A \sysnamefull \\ for Blockchain Wallets}
\anonfalse

\ifanon
  \author{}
  \authorrunning{Anonymous}
  \institute{}
\else
  \author{Jay Yu\inst{1,2} \and Shunfan Zhou\inst{3} \and Hang Yin\inst{3} \and Brian Seong\inst{4, 5}}
  \authorrunning{J. Yu et al.}
  \institute{
    Stanford University \email{jyu01@stanford.edu}
    \and
    Pantera Capital \email{jay@panteracapital.com}
    \and
    Phala Network \email{\{shelvenzhou,hangyin\}@phala.network}
    \and
    Polygon Labs \email{bseong@polygon.technology}
    \and
    Chainless Company \email{brian@chainless.company}
  }
\fi

\usepackage{algorithm}
\usepackage{algpseudocode}

\begin{document}
\maketitle

\input{abstract}

\input{Sections/all.tex}
\end{document}

%% file: preamble.tex
\input{Conferences/templates.tex}

\newif \ifcomments \commentsfalse
\newif \ifanon \anonfalse

\ifUSENIX \else \usepackage[table]{xcolor} \fi
\usepackage{xurl}
\usepackage{tabularx}
\usepackage{makecell}
\usepackage{booktabs}
\usepackage{xspace}
\usepackage{float}
\usepackage{hyperref}
\hypersetup{
    colorlinks=false,
    hidelinks
}
\usepackage[many]{tcolorbox}        
\usepackage{lipsum}
\usepackage{cleveref}
\usepackage{enumitem}
\setlist{nosep, topsep=1pt, leftmargin=*}
\usepackage{booktabs}
\usepackage{makecell}
\usepackage{multirow}
\usepackage[figuresright]{rotating}
\usepackage{array}
\usepackage{nicematrix}
\usepackage{tikz}
\usetikzlibrary{shapes, arrows.meta, positioning, calc, math, tikzmark, fit}

\usepackage{amsfonts,amsmath}
\ifLNCS\else\usepackage{amsthm}\fi
\ifACM\else\usepackage{amssymb}\fi

\definecolor{ForestGreen}{RGB}{34,139,34}

\ifcomments
    \newcommand{\jay}[1]{\textsf{\small{\color{red}{[Jay: {#1}]}}}}

\else

    \newcommand{\jay}[1]{}
\fi

\newcommand{\sysnamefull}{Provenanced Access Subaccount System\xspace}
\newcommand{\sysname}{PASS\xspace}
\newcommand{\sk}{{\sf sk}\xspace}
\newcommand{\pk}{{\sf pk}\xspace}

\newenvironment{icompact}{
  \begin{list}{$\bullet$}{
    \parsep 0pt
    \partopsep 0pt
    \topsep 1pt plus 0.5pt minus 0.5pt
    \itemsep 0pt
    \parskip 0pt plus 1pt
    \leftmargin 0.15in}}
  {\normalsize\end{list}}

\newenvironment{ncompact}{
  \begin{list}{\arabic{enumi}.}{
    \usecounter{enumi}
    \parsep 0pt
    \partopsep 0pt
    \topsep 1pt plus 0.5pt minus 0.5pt
    \itemsep 0pt
    \parskip 0pt plus 1pt
    \leftmargin 0.15in}}
  {\end{list}}

\tcbset{
    sharp corners,
    colback = white,
    before skip = 0.1cm,
    after skip = 0.2cm     
}                           

\newtcolorbox{boxA}{
    fontupper = \bf,
    boxrule = 1.5pt,
    colframe = black 
}

\iffull 
\else  \fi

\iffull 
\else  \fi

\ifLNCS\else

\iffull \newtheorem{theorem}{Theorem}[section]
\else \newtheorem{theorem}{Theorem} \fi
\newtheorem{corollary}{Corollary}[theorem]

\newtheorem{proposition}[theorem]{Proposition}

\theoremstyle{definition}

\theoremstyle{remark}

\fi

%% file: Conferences/templates.tex
\newif \iffull 
\newif \ifACM
\newif \ifUSENIX
\newif \ifIEEE
\newif \ifLNCS

\newif \ifCCS
\newif \ifSP
\newif \ifNDSS
\newif \ifCrypto
\newif \ifFC

\def\fullstring{full}
\def\ACMstring{ACM}
\def\USENIXstring{USENIX}
\def\IEEEstring{IEEE}
\def\LNCSstring{LNCS}

\def\CCSstring{CCS}
\def\SPstring{SP}
\def\NDSSstring{NDSS}
\def\Cryptostring{CRYPTO}
\def\FCstring{FC}

\ifx \version\fullstring \fulltrue \fi
\ifx \version\ACMstring \ACMtrue \fi
\ifx \version\USENIXstring \USENIXtrue \fi
\ifx \version\IEEEstring \IEEEtrue \fi
\ifx \version\LNCSstring \LNCStrue \fi

\ifx \version\CCSstring \ACMtrue \CCStrue \fi
\ifx \version\SPstring \IEEEtrue \SPtrue \fi
\ifx \version\NDSSstring \IEEEtrue \NDSStrue \fi
\ifx \version\Cryptostring \LNCStrue \Cryptotrue \fi
\ifx \version\FCstring \LNCStrue \FCtrue \fi

\iffull \input{Conferences/FullVersion/fullversion.tex} \fi
\ifACM \input{Conferences/ACM/acm.tex} \fi
\ifUSENIX \input{Conferences/USENIX/usenix.tex} \fi
\ifIEEE \input{Conferences/IEEE/IEEE.tex} \fi
\ifLNCS \input{Conferences/LNCS/lncs.tex} \fi

%% file: Conferences/LNCS/lncs.tex
\PassOptionsToPackage{dvipsnames}{xcolor}
\documentclass[runningheads]{Conferences/LNCS/llncs}

\ifsubmission \author{} \institute{} \fi

%% file: abstract.tex

\begin{abstract}
Blockchain wallets conventionally follow an ownership model where possession of a private key grants unilateral control. However, this assumption is brittle for emerging settings such as AI agent wallets, organizational custody, and enterprise payroll, where multiple actors must coordinate without exposing secrets or leaking internal activity.

\quad We present \sysname, a \textbf{Provenanced Access Subaccount System} that replaces role or identity-based control with provenance-based control: assets can only be used by subaccounts that can trace custody back to a valid deposit. A simple Inbox–Outbox mechanism ensures all external actions have verifiable lineage, while internal transfers remain private and indistinguishable from ordinary EOAs.  
  
\quad We formalize \sysname in Lean~4 and prove core invariants, including privacy of internal transfers, asset accessibility, and provenance integrity. We implement a prototype with enclave backends on AWS Nitro Enclaves and dstack Intel TDX, integrate with WalletConnect, and benchmark throughput across wallet operations. These results show that provenance-based wallets are both implementable and efficient. PASS bridges today’s gap between strict self-custody and flexible shared access, advancing the design space for practical, privacy-preserving custody.
\end{abstract}


%% file: Sections/all.tex

\input{Sections/introduction}
\input{Sections/background}

\input{Sections/pass-design}
\input{Sections/properties}
\input{Sections/practice}
\input{Sections/applications}

\input{Sections/future}
\input{Sections/conclusion}

\ifLNCS
\section*{Acknowledgements}
This work was supported in part by a grant from the Ethereum Foundation's 2025 Academic Grants Round. We thank Prof. Dan Boneh, Prof. Ari Juels, and the IC3 team for their guidance, feedback, and insightful discussions throughout this project. Their support significantly strengthened both the technical development and presentation of this work.
\fi


\iffull \bibliographystyle{plain} \fi
\ifACM \bibliographystyle{ACM-Reference-Format.bst} \fi
\ifUSENIX \bibliographystyle{plain} \fi
\ifIEEE \bibliographystyle{plain} \fi
\ifLNCS \bibliographystyle{Conferences/LNCS/splncs04.bst} \fi

\bibliography{references}

\ifSP \appendices
    \iffull
    \else
    \input{Sections/A-other-applications}
    \input{Sections/A-liveness-details}
    \input{Sections/A-complete-knowledge-details}

    \fi
\else
  \ifLNCS
    \appendix
    \input{Sections/A-formal-verification}
    \input{Sections/A-dstack-system}
    \input{Sections/A-benchmark-results}
    \input{Sections/A-design-extensions}
  \else
    \appendix
  \fi
\fi

%% file: Sections/introduction.tex
\section{Introduction}\label{sec:intro}

Cryptocurrency wallets conventionally assume that whoever holds the private key has permanent and total control of the address, and casts publicly readable, gas-consuming blockchain transactions. However, this model struggles to adapt to emerging blockchain scenarios:
\begin{ncompact}
  \item \textbf{AI and Smart Agent Transactions.} AI agents increasingly manage crypto assets \cite{patlan2025aiagentscryptoland,walters2025eliza} and require bounded authority and access guardrails for users' asset security.
  \item \textbf{Enterprise Payroll Privacy.} Organizations need to distribute salaries and departmental budgets to employee accounts without revealing sensitive details on-chain \cite{pertsev2019tornado,fincen2020ThresholdRule,auer2023bankingbis}.
  \item \textbf{Consumer-grade Central Limit Order Books.} On-chain order books face high gas costs and latency that prevent real-time trading at scale \cite{seong2025chainlessappsmodularframework,moosavi2021lissy}, and many consumers prefer not to manage their own keys \cite{brunner2021keystorage,yu2024walletchoices}.
\end{ncompact}

\noindent All three use cases demand support for \textbf{internal transfers} that are fast, low-cost (ideally free), and private, with most value movement occurring internally and on-chain transactions serving only for settlement.
Purely on-chain or custodial solutions cannot meet these requirements: on-chain transfers are slow, expensive, and public, while custodial models introduce trust assumptions and compromise privacy through full custodian visibility.
The core challenge involves \textbf{private shared state}—asset balances, transfer histories, and user subaccounts that must remain confidential while being collectively maintained. This requires TEEs for confidential computation with auditability, as well as designing appropriate access control models for this new computational paradigm.


In response to these needs, organizations and wallet providers have attempted to retrofit traditional access control paradigms—such as Role-Based Access Control (RBAC) and Attribute-Based Access Control (ABAC)—to govern internal transfers and shared wallet usage \cite{wallet_features_users2025,erinle2024shared_custodial}. This typically involves layering custom policy engines or custodial wrappers atop the standard single-owner wallet, specifying who may initiate or approve transfers under various conditions \cite{fireblocks_governance_policy_engine,turnkey_policy_quickstart}.
However, traditional access-control frameworks such as RBAC and ABAC assume an \emph{endogenous} setting, as users create and manage their own files, objects, or services, and policies are attached to these identities or roles. Blockchains are fundamentally \emph{exogenous}, where assets are created outside the wallet by smart contracts and transferred permissionlessly in. In this setting, identity or role-based rules are awkward to express and brittle to enforce. In a blockchain context, what matters instead is provenance, describing who deposited an asset, and how assets flow within a wallet.  

In this paper, we propose \sysname, a Provenanced Access Subaccount System that enforces control directly by provenance. The core mechanism is simple: an \textbf{Inbox} records external deposits, an \textbf{Outbox} signs and broadcasts withdrawals.
By leveraging a Trusted Execution Environment (TEE), all internal movements remain private and fee-less. Every unit of value inside the wallet has a verifiable lineage to an on-chain deposit, while externally a PASS wallet is indistinguishable from a standard EOA.
Our contributions include:
\begin{icompact}
    \item A provenance-based access model aligned with blockchain’s exogenous asset creation, enforced through an Inbox–Outbox design.
    \item A formal model in Lean~4 proving privacy, accessibility, and provenance integrity.
    \item A prototype with multi-vendor enclave backends (AWS Nitro, Intel TDX via dstack~\cite{zhou2025dstackzerotrustframework}), a web app frontend integrated with WalletConnect, together with throughput benchmarks.
    \item Application scenarios, including organizational custody, privacy-preserving transfers, and safe delegation to AI agents.
\end{icompact}

PASS thus bridges the gap between strict self-custody and flexible shared access, combining strong privacy guarantees with practical deployability.

%% file: Sections/background.tex
\section{Background and Related Work}\label{sec:background}
\input{Sections/Figures/sota-comparison}

Growing retail and institutional adoption of digital assets \cite{goldmansachs2021crypto} has driven diverse wallet designs \cite{mangipudi2023uncovering}. We summarize mainstream Ethereum Virtual Machine (EVM) wallet designs in Figure~\ref{tab:sota-wallet-comparison}.

\textbf{EOAs and Multisigs.} Externally Owned Accounts (EOAs) give unilateral, permanent control to whoever holds the private key \cite{buterin2014ethereum}. Multisignature wallets (e.g., Safe) distribute trust via $t$-of-$n$ approvals \cite{safe_docs_multisig}, common for treasuries and bridges \cite{tally_gnosis_safe,wormhole_guardians,forbes_bybit_hack}. Drawbacks include quorum overhead, symmetric authority, and public logic with attackable frontends \cite{forbes_bybit_hack}. In contrast, \sysname offers fine-grained, scoped permissioning without new contracts.

\textbf{Smart Contract Wallets and Account Abstraction.} Wallets like Argent and Loopring add programmability (social recovery, spend limits) via on-chain logic \cite{argent_docs,loopring_wallet_docs,cointelegraph2024smart}. ERC-4337 enables custom validation through \textsf{UserOperation} contracts \cite{erc4337_user_operation}, and EIP-7702 lets EOAs invoke this logic \cite{safe2024eip7702}. These increase flexibility but expose policies publicly and incur gas costs, limiting use cases like payroll. \sysname instead enforces policy privately in a TEE, minimizing on-chain footprint.

\textbf{Custodial Policies.} Custody providers (Fireblocks, Turnkey, Privy) use RBAC/ABAC engines with custom policy languages \cite{fireblocks_governance_policy_engine,turnkey_policy_quickstart}. \sysname replaces these with provenance: authority follows asset origin and history, akin to a UTXO model. This reduces policy-surface complexity while remaining composable with constraints like whitelists and blacklists. 

\textbf{TEEs and Liquefaction.} Trusted Execution Environments (TEEs) provide hardware-isolated contexts with remote attestation \cite{mckeen2013innovative,schneider2022sokhardwaresupportedtrustedexecution}. Liquefaction \cite{Austgen2024Liquefaction} keeps wallet keys in a TEE and enforces sub-policies via key encumbrance, ensuring privacy. \sysname advances this model by (i) proving invariants in Lean~4, (ii) simplifying enforcement to a provenance rule with Inbox–Outbox subaccounts, and (iii) working with a standard EOA without relying on Oasis smart contracts \cite{oasis_protocol}. We further validate by implementing PASS on AWS Nitro Enclaves and Intel TDX (via dstack), and showcase applications including AI wallets, payroll, supply chains, and DAO voting.

\textbf{Other Models.} Exchanges pool funds in omnibus addresses with off-chain ledgers \cite{venly_omnibus_vs_segregated_wallets}; mixers like Tornado Cash hide deposit–withdrawal links \cite{pertsev2019tornado}. \sysname similarly hides internal transfers while retaining a provenance log for auditability; outsiders see only deposits and withdrawals.

%% file: Sections/Figures/sota-comparison.tex
\begin{table*}[!ht]
  \centering
  \scriptsize
  \renewcommand{\arraystretch}{1.25}
  \begingroup
    \setlength{\tabcolsep}{5pt} 
    \begin{NiceTabular}{%
      >{\columncolor{white}\raggedright\arraybackslash}p{1.7cm}|
      >{\raggedright\arraybackslash}p{2.2cm}
      >{\raggedright\arraybackslash}p{2.2cm}
      >{\raggedright\arraybackslash}p{2cm}
      >{\raggedright\arraybackslash}p{2.0cm}
    }[color-inside]
      \CodeBefore
        \rowcolors{2}{gray!20}{}
      \Body
      \toprule
      \rowcolor{gray!50}
      \textbf{Model} & \textbf{Control basis} & \textbf{On-chain overhead} & \textbf{Internal Privacy} & \textbf{Compatibility} \\
      \midrule
      EOA & Single private key  & None & None & Native EOA \\
      \midrule
      Multisig & $t$-of-$n$ co-signers & High; quorum per action & None; logic public & Smart contract based \\
      \midrule
      ERC-4337 Smart Accounts
                     & On-chain policy logic                  & Medium to high; custom logic and gas   & None; logic public                     & Requires ERC-4337 chain support \\
      \midrule
      Custodians     & Custom RBAC or ABAC languages          & None; Off-chain enforcement                  & Custodian sees all activity            & API mediated \\
      \midrule
      Liquefaction   & TEE key encumbrance; policy trees      & Medium; Oasis contracts and gas & Strong; TEE hides subpolicies          & Requires Oasis deployment \\
      \midrule
      \textbf{PASS (ours)}
                     & \textbf{Provenance via Inbox–Outbox}  & \textbf{None; EOA compatible}          & \textbf{Strong; internal transfers invisible} & \textbf{Native EOA plus off-the-shelf TEEs} \\
      \bottomrule
    \end{NiceTabular}
  \endgroup
  \caption{Comparison of wallet architectures. PASS features provenance-based control, off-chain privacy, and native EOA compatibility.}
  \label{tab:sota-wallet-comparison}
\end{table*}

%% file: Sections/pass-design.tex
\section{System Model and Design}\label{sec:design}

\vspace{-0.8em}
\begin{figure}[!th]
\includegraphics[width=1\textwidth]{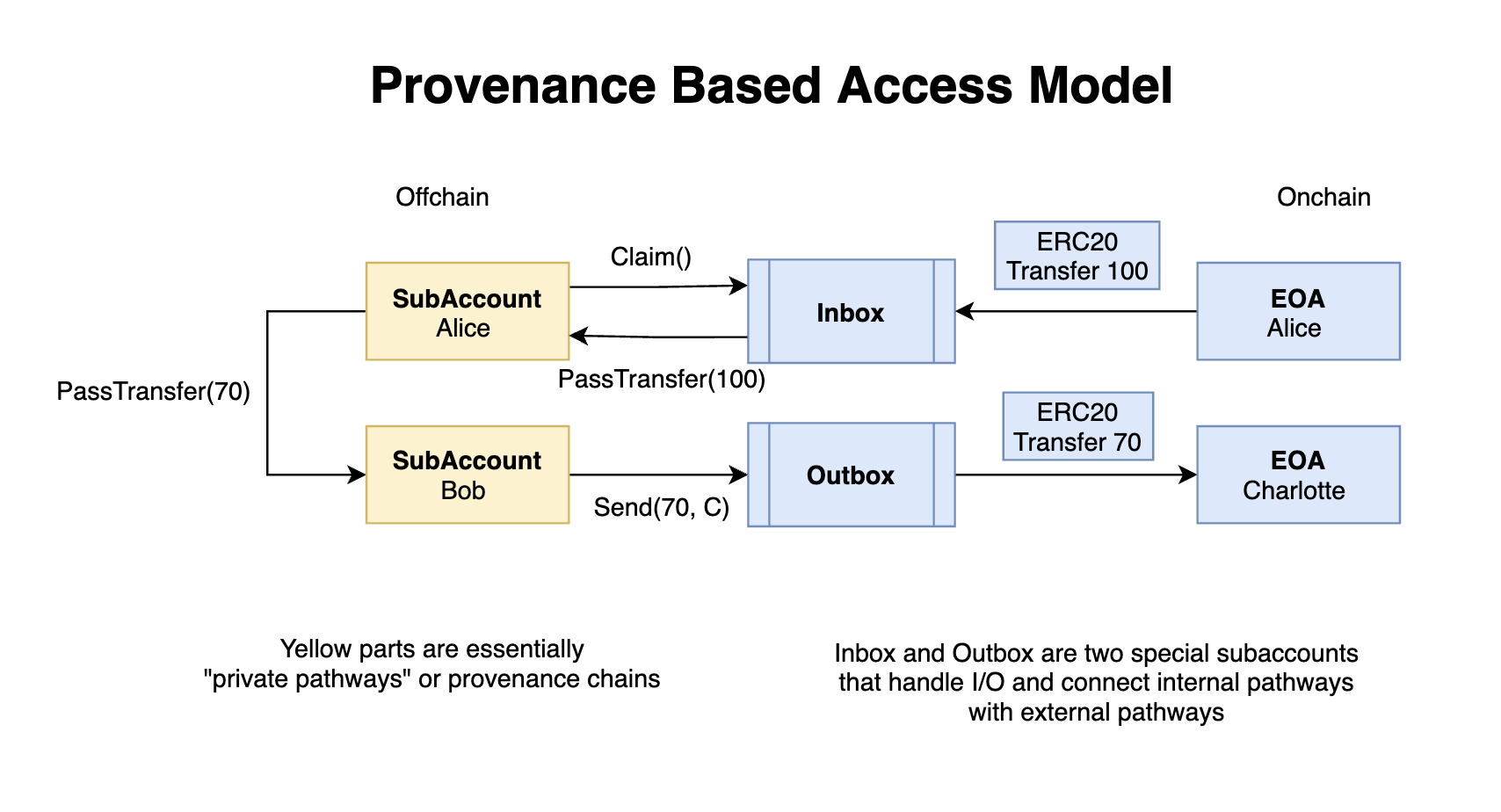}
\centering
\vspace{-0.8em}
\caption{Overview of \sysname. External deposits enter the \textbf{Inbox} and are claimed by subaccounts. Internal transfers are private and off-chain. Egress occurs via the \textbf{Outbox}, which signs and broadcasts on-chain transactions. All on-chain activity has clear provenance; internal movements remain private.}
\label{fig:pass-architecture}
\end{figure}
\vspace{-0.8em}

\subsection{High-level system model}\label{sec:design-highlevel}
\sysname is a multi-user wallet on a single EOA $(\sk,\pk)$. The wallet maintains a finite set of subaccounts $\mathcal{U}=\{u_1,\dots,u_n\}$ and a set of assets $\mathcal{A}$ (e.g., ERC-20). A subaccount $u\in\mathcal{U}$ has authority over an amount $x$ of asset $\alpha\in\mathcal{A}$ iff it can obtain a signature from the global $\sk$ via the enclave; $\sk$ resides only inside a TEE and is never directly accessible. We give the intuitive model here and formalize it in Section~\ref{sec:formal-system}.

Internally, \sysname tracks balances with a map $\mathcal{L}:\mathcal{U}\times\mathcal{A}\to\mathbb{Z}_{\ge 0}$ (accessed via $\mathcal{L}[u][\alpha]$), an \textbf{Inbox} multiset $\mathcal{I}$ of unclaimed deposits $(\alpha,x,m)$, an \textbf{Outbox} FIFO queue $\mathcal{O}$ of pending on-chain actions $(\alpha,x,\mathrm{extDst},\nu)$ together with a global nonce $\nu\in\mathbb{N}$, and an append-only provenance log $\mathcal{H}$. The core principles of PASS, as shown in Figure~\ref{fig:pass-architecture} are as follows:
\begin{ncompact}
  \item \emph{Access follows provenance.} A unit of value is usable by $u$ only if there exists a custody chain in $\mathcal{H}$ from an external deposit to $u$. Informally, $\textsf{Allow}(u,\alpha,x)\quad$ means balance $x$ is reachable for $u$ via provenance in $\mathcal{H}$. This implies \emph{consistency}: for every $\alpha$, the total internal balance equals the on-chain balance at $\pk$, i.e., $\sum_{u\in\mathcal{U}}\mathcal{L}[u][\alpha]=B_{\mathrm{on}}(\alpha)$. Incoming deposits $(\alpha,x,m)\in\mathcal{I}$ are ``wrapped'' and must be explicitly claimed before use by the sender address.

  \item \emph{Off-chain privacy.} Internal transfers $u\!\to\!v$ update only $\mathcal{L}$ and $\mathcal{H}$; they do not touch $\mathcal{O}$ or the on-chain balance $B_{\mathrm{on}}$, so internal flow is invisible and fee-less.

  \item \emph{First-come, first-served.} Outbox transactions are sequenced by a FIFO queue, which maintains and increments a global nonce $\nu$ and directly sends enclave-signed transactions to a RPC node to prevent nonce conflicts and replay attacks.\footnote{We assume that the TEE is able to run a Helios light client, as is the case in dstack Intel TDX, discussed and implemented in Section~\ref{sec:benchmarks}.}
\end{ncompact}

  \textbf{Organizational Ownership Model.} \sysname operates under an organizational ownership model, where an entity such as a company, individual, or DAO manages shared control over blockchain assets. We detail the threat model in Section~\ref{sec:properties}.

\textbf{Inbox–Outbox Gateway}
External senders transfer assets to $\pk$, creating Inbox entries $(\alpha,x,m)\in\mathcal{I}$. A user performs \emph{claim}$(u,\alpha,x,m)$ to move value into $\mathcal{L}[u][\alpha]$, conceptually “wrapping’’ the deposit. To withdraw or interact with a contract, a subaccount first performs an internal transfer to the \textbf{Outbox}, which enqueues $(\alpha,x,\mathrm{extDst},\nu)$; the enclave then signs the concrete transaction with $\sk$ and immediately broadcasts it, incrementing $\nu$. The FIFO discipline and global nonce serialize actions and prevent races and double-spends. Because subaccounts never receive pre-signed transactions, pre-signing misuse is mitigated.

\textbf{Composable policy scheme.}
Deposits are claims from $\mathcal{I}$; internal transfers require $\textsf{Allow}(u,\alpha,x)$ and update $(\mathcal{L},\mathcal{H})$; withdrawals are internal transfers to $\mathcal{O}$ followed by submission. Additional constraints (e.g., RBAC or ABAC checks, whitelists and blacklists) can be incorporated as conjuncts to \textsf{Allow} without changing the provenance-centric rule or its consistency and privacy effects.

\subsection{Formal Model}
\label{sec:formal-system}

We formalize \sysname as a transition system in Lean~4\footnote{Lean~4 model: \url{https://github.com/jayyu23/pass-lean4-proofs}}, an interactive theorem prover with a rich dependent type system \cite{lean_documentation}, suitable for policy verification (e.g., AWS Cedar \cite{cutler2024cedar}). In this section, we introduce the formal system we modeled in Lean, and in Section~\ref{sec:properties} we will discuss security invariants modeled. An extended discussion of our formal verification system can be found in Appendix~\ref{sec:formal-verification-extended}.

\textbf{State.} A wallet state is
$S=\bigl(\pk,\sk,\nu,\mathcal{I},\mathcal{O},\mathcal{L},\mathcal{H}\bigr)$
where $(\pk,\sk,\nu)$ are the EOA public key, private key (generated and held in the TEE), and global nonce; $\mathcal{I}$ and $\mathcal{O}$ are Inbox and Outbox; $\mathcal{L}$ is the internal balance map $\mathcal{L}[u][\alpha]\in\mathbb{Z}_{\ge 0}$; and $\mathcal{H}$ is the provenance history. For privacy arguments we also use the external view $W_{\mathrm{pub}}=(\pk,O,A)$, where $O$ contains the Outbox queue and nonce and $A$ records total on-chain balances by asset.

\textbf{Transitions.}
Operations $\textsf{op} \in T$ update $S$ with type-checked rules:
\emph{InboxDeposit} adds an unclaimed $(\alpha,x,m)$ to $\mathcal{I}$; \emph{ClaimInbox} moves $(\alpha,x,m)$ from $\mathcal{I}$ to $\mathcal{L}[u][\alpha]$ and appends to $\mathcal{H}$; \emph{InternalTransfer} moves value between subaccounts; \emph{Withdraw} debits $\mathcal{L}[u][\alpha]$, enqueues $(\alpha,x,\mathrm{extDst},\nu)$ in $\mathcal{O}$, and logs; \emph{SignGSM} authorizes domain-scoped signable messages. A representative rule is:
\[
\bigl(\pk,\sk,\nu,\mathcal{I},\mathcal{O},\mathcal{L},\mathcal{H}\bigr)\xrightarrow{\textsf{transfer}(\alpha,x,u_s,u_r)}\bigl(\pk,\sk,\nu,\mathcal{I},\mathcal{O},\mathcal{L}',\mathcal{H}'\bigr)
\]
with
\[
\begin{aligned}
&\text{if } \mathcal{L}[u_s][\alpha]<x \text{ or } \neg\textsf{checkAllow}(u_s,\alpha,x,\mathcal{H}) \text{ then } \bot;\\
&\mathcal{L}'[u_s][\alpha]=\mathcal{L}[u_s][\alpha]-x,\quad \mathcal{L}'[u_r][\alpha]=\mathcal{L}[u_r][\alpha]+x;\\
&\mathcal{H}'=\mathcal{H}\ \big|\ \bigl(\textit{op}=\texttt{transfer},\alpha,x,u_s,u_r\bigr).
\end{aligned}
\]

Here \textsf{checkAllow} queries provenance in $\mathcal{H}$ and can be extended with auxiliary constraints such as user whitelists or blacklists, without changing the core rule. Figure~\ref{fig:pass-symbolic-create} provides a formal specification of the methods in a PASS wallet.

\input{Sections/Figures/PASS-functions}

\textbf{Provenance Attestation.} The TEE can generate cryptographic attestations over the integrity of $\mathcal{H}$ by signing its digest: $\sigma = \textsf{Sign}_{\sk}(\textsf{Hash}(\mathcal{H}))$. This enables external verification of custody chains for compliance auditing and regulatory requirements without revealing internal transfer details. Attestations can also cover other system components like the global nonce $\nu$ for freshness proofs or balance states for solvency verification, following EIP-712 structured data standards for interoperability.

\subsection{Asset Model and General Signable Messages}\label{sec:asset-model}
\sysname provides an authorization scheme over a heterogeneous set of assets $\mathcal{A}$, including fungible tokens, such as ether and ERC-20 tokens, ERC-721 non-fungible tokens, and ``general signable messages" across a variety of DApps for sign-in and identity verification. Each asset $\alpha \in \mathcal{A}$ has a well-defined custody path in the provenance log $\mathcal{H}$, and a subaccount $u\in\mathcal{U}$ has usable balance $x$ of $\alpha$ iff $\textsf{Allow}(u,\alpha,x)$ holds.

\textbf{Token assets.} 
For token assets such as ether, ERC-20, and ERC-721, deposits are captured as Inbox entries $(\alpha,x,m)\in\mathcal{I}$. Claims allocate balance into $\mathcal{L}[u][\alpha]$, internal transfers update $\mathcal{L}$ and $\mathcal{H}$ privately, and withdrawals enqueue $(\alpha,x,\mathrm{extDst},\nu)$ into $\mathcal{O}$ for enclave signing and on-chain broadcast, ensuring consistency where:
\[
\forall \alpha \in \mathcal{A}_{\mathrm{tokens}}:\quad \sum_{u\in\mathcal{U}}\mathcal{L}[u][\alpha] \;=\; B_{\mathrm{on}}(\alpha,\pk).
\]

In particular, ERC-721 tokens are modeled with unique identifiers, where each token ID is treated as a distinct asset $\alpha_{tokenId}$ with unitary balance ($x\in\{0,1\}$), ensuring uniqueness and preventing double-ownership under provenance.

\textbf{General Signable Messages (GSMs).}
In addition to on-chain assets, wallets often sign arbitrary messages, such as for EIP-4361 \cite{siwe_eip4361} logins and off-chain DAO votes \cite{snapshot_voting}. \sysname models these as a special class $\mathcal{A}_{\mathrm{GSM}}$ of virtual assets. Each GSM asset is scoped to a domain $\mathrm{dom}$, so holding $\alpha=(\mathrm{dom})$ represents authority to sign messages from that domain. A signature operation:
$
\textsf{SignGSM}(u,\mathrm{dom},m) \mapsto \sigma$ is permitted iff $\textsf{Allow}(u,\mathrm{dom},1)$ holds in $\mathcal{H}$. At key generation, the wallet creator is endowed with \emph{default signing authority} over all domains, represented as an initial allocation of every $\alpha=(\mathrm{dom})\in\mathcal{A}_{\mathrm{GSM}}$ to the root subaccount. These domain assets can be internally transferred to delegate signing authority. Thus provenance of GSMs begins at the owner and flows through transfers, providing a well-defined provenance path in $\mathcal{H}$ as with token assets.

By treating both tokens and GSMs as types of elements of $\mathcal{A}$, \sysname provides a unified provenance rule for subaccount authorization. A subaccount cannot spend an ERC-20 token, transfer an NFT, or sign a domain message unless it possesses the corresponding asset under $\mathcal{L}$ and $\mathcal{H}$. This unified authorization scheme allows for fine-grained delegation, (e.g., giving $u$ rights over \texttt{example.com} while withholding ERC--20 funds) and consistent enforcement across on-chain and off-chain actions.

%% file: Sections/Figures/PASS-functions.tex
\begin{figure}[!t]
\centering
\fbox{%
\begin{minipage}{0.94\linewidth}
\scriptsize

\noindent
\textbf{PASS Wallet Management Functionality} 
$\;\mathcal{F}_{\text{PASS}}^{\Sigma}$ for wallet $\mathcal{W}$

\noindent
\textbf{Global State:}
\begin{itemize}[label={},noitemsep,nolistsep,leftmargin=1.5em]
  \item $\mathit{pk}$: public address (EOA)
  \item $\mathit{sk}$: private key (in TEE)
  \item $\nu$: global nonce for on-chain transactions
  \item $\mathcal{I}$ (\textsf{Inbox}): set of unclaimed deposits $(\alpha: \text{asset}, x: \text{amount}, m: \text{depositId})$
  \item $\mathcal{O}$ (\textsf{Outbox}): FIFO queue $(\alpha , x, \textit{extDst}, \nu)$
  \item $\mathcal{L}$ (\textsf{Asset Ledger}): internal map $\mathcal{L}[u][\alpha]\in\mathbb{Z}_{\ge 0}$, where $u$ is each user's ID in the system
  \item $\mathcal{H}$ (\textsf{Provenance History}): a list of transaction records appended whenever \textbf{ClaimInbox}, \textbf{InternalTransfer}, or \textbf{Withdraw} is invoked
\end{itemize}

\medskip

\noindent
\textbf{CreatePassWallet} $(\mathit{creator})$:
\begin{itemize}[label={},noitemsep,nolistsep,leftmargin=3em]
  \item $\mathcal{I} \!\gets\! \varnothing,\ \mathcal{O} \!\gets\! \varnothing,\ 
        \mathcal{L} \!\gets\! \{\};\ \nu\!\gets 0$
  \item ($\mathit{pk} ,\ \mathit{sk}) \!\gets\! \text{TEE\_KeyGen}()$
  \item Return new $\mathcal{W}$ with 
        $S = \bigl(\mathit{pk}, \mathit{sk}, \nu, \mathcal{I}, \mathcal{O}, \mathcal{L}, \mathcal{H}\bigr)$
        and reference to $\mathit{creator}$
\end{itemize}

\noindent
\textbf{InboxDeposit} $(\alpha, x, m)$ from external $P$:
\begin{itemize}[label={},noitemsep,nolistsep,leftmargin=3em]
  \item $\mathcal{I} \;\gets\; \mathcal{I}\ \cup\ \{(\alpha, x, m)\}$
  \item Return $\texttt{success}$
\end{itemize}

\noindent
\textbf{ClaimInbox} $(\alpha, x, \textit{depositId})$ from $u$:
\begin{itemize}[label={},noitemsep,nolistsep,leftmargin=3em]
  \item If no $(\alpha, x, m)$ with $\textit{depositId}$ in $\mathcal{I}$, return $\bot$
  \item Remove $(\alpha, x, m)$ from $\mathcal{I}$
  \item $\mathcal{L}[u][\alpha] \gets \mathcal{L}[u][\alpha] + x$
  \item $\mathcal{H} \gets \mathcal{H} \,|\, (\textit{op}=\texttt{claim}, \alpha, x, \textit{depositId}, u)$
  \item Return $\texttt{success}$
\end{itemize}

\noindent
\textbf{CheckAllow} $(u_{\mathrm{send}}, \alpha, x, \mathcal{H})$:

\begin{itemize}[label={},noitemsep,nolistsep,leftmargin=3em]
\item If $\mathcal{H}$.GetProvenance($\alpha$, $u_{\mathrm{send}}) \ge x$,
\item return $\texttt{true}$; else return $\texttt{false}$
\end{itemize}

\noindent
\textbf{InternalTransfer} $(\alpha, x, u_{\mathrm{send}}, u_{\mathrm{recv}})$:

\begin{itemize}[label={},noitemsep,nolistsep,leftmargin=3em]
\item If $\mathcal{L}[u_{\mathrm{send}}][\alpha]<x \; \text{or } \neg \textsf{checkAllow$(u_{\mathrm{send}}, \alpha, x, \mathcal{H}$})$, return $\bot$
\item $\mathcal{L}[u_{\mathrm{send}}][\alpha] \gets \mathcal{L}[u_{\mathrm{send}}][\alpha] - x$
\item $\mathcal{L}[u_{\mathrm{recv}}][\alpha] \gets \mathcal{L}[u_{\mathrm{recv}}][\alpha] + x$
\item $\mathcal{H} \gets \mathcal{H} \,\bigl|\bigr.\, \bigl(\textit{op}=\texttt{transfer}, \alpha, x, u_{\mathrm{send}}, u_{\mathrm{recv}}\bigr)$
\item Return $\texttt{success}$
\end{itemize}

\noindent
\textbf{Withdraw} $(\alpha, x, u_{\mathrm{send}}, \textit{extDst})$:
\begin{itemize}[label={},noitemsep,nolistsep,leftmargin=3em]
    \item If $\mathcal{L}[u_{\mathrm{send}}][\alpha]<x \; \text{or } \neg \textsf{checkAllow$(u_{\mathrm{send}}, \alpha, x, \mathcal{H})$}$, return $\bot$
    \item $\mathcal{L}[u_{\mathrm{send}}][\alpha] \gets \mathcal{L}[u_{\mathrm{send}}][\alpha] - x$
    \item $\mathcal{O} \gets \mathcal{O} \cup \{(\alpha, x, \textit{extDst}, \nu)\}$
    \item $\mathcal{H} \gets \mathcal{H} \ | \ (\textit{op}=\texttt{withdraw}, \alpha, x, u_{\mathrm{send}}, \textit{extDst})$
    \item Return $\texttt{success}$
\end{itemize}

\noindent
\textbf{SignGSM} $(\mathit{dom}, \mathit{msg}, u)$:
\begin{itemize}[label={},noitemsep,nolistsep,leftmargin=3em]
  \item If $u$ lacks sign-right for $(\mathit{dom}, \mathit{msg})$, return $\bot$
  \item $\sigma \gets \textsf{Sign}(\mathit{sk}, \mathit{msg})$
  \item Return $\sigma$
\end{itemize}

\noindent
\textbf{(Outbox Processing)} (periodic or on demand):
\begin{itemize}[label={},noitemsep,nolistsep,leftmargin=3em]
  \item If $\mathcal{O}\neq\varnothing$, dequeue $(\alpha, x, \textit{extDst}, \nu)$
  \item Build tx $\tau$: transfer $x$ of $\alpha$ to $\textit{extDst}$, nonce $=\nu$
  \item $\sigma \gets \textsf{Sign}(\mathit{sk},\, \tau)$; broadcast $(\tau,\sigma)$
  \item $\nu \!\gets\!\nu + 1$
  \item Return $\texttt{broadcasted}$ + txHash
\end{itemize}

\end{minipage}}

\caption{\small Formal Model of PASS wallet functionality, including \textbf{CreatePassWallet}
for initializing a fresh instance.
On creation, $\mathit{pk}$ is set to a chosen EOA address $\mathit{eoa}$, the TEE
generates a hidden private key $\mathit{sk}$, and global data
structures are initialized. 
External deposits enter $\mathcal{I}$ (\textsf{Inbox}), 
while subaccount transfers update $\mathcal{L}$ (\textsf{Ledger}) off-chain. 
Withdrawals queue items in $\mathcal{O}$ (\textsf{Outbox}), 
and GSM operations are signed using $\mathit{sk}$.}
\label{fig:pass-symbolic-create}
\end{figure}

%% file: Sections/properties.tex
\section{Properties of \sysname}\label{sec:properties}

\subsection{Threat Model}

\sysname operates under an organizational ownership model where a single entity (company, DAO, or individual) manages the EOA $(pk, sk)$ and TEE infrastructure. This model is typical for use cases described in Section~\ref{sec:intro}. Protected assets include the private key $sk$, balance ledger $\mathcal{L}$, provenance log $\mathcal{H}$, and access control policy \textsf{Allow}($u, \alpha, x$).
In our model, we make the following trust assumptions:
\begin{icompact}
    \item \textbf{TEE integrity.} Hardware isolation is sound; $sk$ never leaves the enclave. Remote attestation guarantees that the
    enclave runs the genuine \sysname binary.
    \item \textbf{Cryptography.} Standard primitives (signatures, hash functions, PRFs) are secure against polynomial-time
    adversaries.
    \item \textbf{Governance majority.} The deploying entity holds majority control in the governance mechanism (via DAO vote or admin privileges), ensuring it can invoke the blockchain-governed KMS for MPC-based key recovery and trigger fallback contracts when needed.  
\end{icompact}

\textbf{Adversary Model.}
An adversary $\mathcal{A}$ may control the host OS and network,
observe or modify all communication between enclaves, and submit arbitrary transaction requests. In particular, $\mathcal{A}$ may attempt to:
\begin{icompact}
    \item Exploit software bugs in the \sysname implementation or TEE
    runtime,
    \item Forge or replay transactions to break consistency of $\mathcal{H}$,
    \item Subvert provenance by violating
    \textsf{Allow}$(u,\alpha,x)$
\end{icompact}

\textbf{Threats Out of Scope.}
Physical TEE compromise, supply-chain attacks, and malicious operators with unrestricted access (excluded due to organizational security controls) are regarded out of scope because the organizational model assumes controlled facilities and vetted hardware.

\textbf{Security Goals.} Given the assumptions above, \sysname guarantees:
\begin{icompact}
    \item \textbf{Integrity:} $\sum_{u \in \mathcal{U}} \mathcal{L}[u][\alpha] \leq
    \textsf{extDep}(\alpha)$ for all $\alpha$, i.e., no assets can be
    created or double-spent internally.
    \item \textbf{Correctness:} If \textsf{Allow}$(u,\alpha,x)$ holds,
    then $u$ can eventually realize $x$ via \textsf{Withdraw}.
    \item \textbf{Privacy on-chain:}
    internal transfer operations do not affect
    $\mathsf{externalTrace}(W)$, and to external observers, a \sysname
    wallet is indistinguishable from an EOA.
    \item \textbf{Liveness:} For any authorized $\mathsf{op}\in\mathsf{Ops}$, if one enclave and quorum remain honest, then $\exists\tau$ s.t. $\mathsf{Exec}(c,\mathsf{op})=\top$ within $\tau$.
\end{icompact}

\subsection{Privacy and Safety}
After establishing the threat model of \sysname, we will address the central privacy and safety guarantees in our formal verification model. We sketch three central lemmas here, and create mechanized proofs in Lean, discussed in Appendix~\ref{sec:formal-verification-extended}.

\begin{proposition}[Internal Transfer Privacy]
    If two public states $W_1,W_2$ share the same $\pk$, identical Outbox $(T,\nu)$, and equal per-asset totals in $A$, then
\[
\mathrm{externalTrace}(W_1)=\mathrm{externalTrace}(W_2).
\]
\end{proposition}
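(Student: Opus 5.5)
The argument is essentially definitional, and I would run it in two steps: first pin down what $\mathrm{externalTrace}$ is a function of, then apply congruence. In the formal model the public view is $W_{\mathrm{pub}}=(\pk,O,A)$, where $O$ is the Outbox queue $T$ together with the nonce $\nu$, and $A$ maps each asset to its total on-chain balance. I would define, or read off from the Lean model, $\mathrm{externalTrace}(W)$ as a function of exactly these three components. Concretely, it is the sequence of signed transactions $(\tau_i,\sigma_i)$ that Outbox Processing would produce from the queue $T$, starting at nonce $\nu$ and signed under the key behind $\pk$, paired with the per-asset balances $A$ visible at $\pk$. The essential observation is that this definition never mentions $\mathcal{L}$, $\mathcal{H}$, or $\mathcal{I}$.

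With that in place, the proof is a congruence argument. Write $W_j=(\pk_j,(T_j,\nu_j),A_j)$ for $j=1,2$. By hypothesis $\pk_1=\pk_2$, $(T_1,\nu_1)=(T_2,\nu_2)$, and $A_1(\alpha)=A_2(\alpha)$ for every $\alpha\in\mathcal{A}$. Function extensionality then gives $A_1=A_2$, so $W_1=W_2$ as records, and applying $\mathrm{externalTrace}$ to both sides yields the claim. In Lean I would destructure both public states, rewrite with the hypotheses (using \texttt{funext} for $A$), and close by \texttt{rfl}.

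The one step that needs care is the determinism of the Outbox trace. If signatures were randomized, two states with the same $(T,\nu)$ could emit different $\sigma$'s. I would handle this in one of two ways. The first is to take the trace to be the unsigned transactions $\tau_i$ (asset, amount, extDst, nonce) together with $\pk$. The second is to assume deterministic signing, as in RFC 6979 ECDSA, so that $\sigma_i$ is a function of $\sk$ and $\tau_i$. I would prefer the first, since it matches the abstraction in which $\sk$ does not appear in $W_{\mathrm{pub}}$.

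I would then add a remark connecting the proposition to the privacy goal. InternalTransfer leaves $\pk$, $\mathcal{O}$, $\nu$, and the on-chain totals unchanged, because it only touches $\mathcal{L}$ and $\mathcal{H}$. So any two executions that differ only by internal transfers satisfy the hypotheses of the proposition and therefore have equal external traces. This corollary is the statement that is actually used for on-chain privacy.
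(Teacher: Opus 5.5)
Your conclusion is right, and your preferred treatment of the trace as the unsigned Outbox transactions matches the paper. The paper defines $\operatorname{externalTrace}(\mathcal{W}_{pub})=\operatorname{map}(\operatorname{outboxTx},T)$ and concludes from $O_1=O_2$ alone: this gives $T_1=T_2$, and the hypotheses on $\pk$ and $A$ are never used. Your route instead passes through the record equality $W_1=W_2$, and that step fails in the paper's model.

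\emph{Why the record-equality step fails.} It works only because you chose to represent $A$ as the function $\alpha\mapsto$ total balance, so that ``equal per-asset totals'' becomes literally $A_1=A_2$. In the paper, $A$ is a list of asset entries accessed through $\operatorname{getAsset}$. The hypothesis is only the pointwise equality $\operatorname{totalBalance}(\operatorname{getAsset}(\mathcal{W}_1,\alpha))=\operatorname{totalBalance}(\operatorname{getAsset}(\mathcal{W}_2,\alpha))$. Here $\operatorname{totalBalance}$ is the sum over an asset's balance map, as the Asset Accessibility proof makes explicit. Two states that differ by an internal transfer therefore have different entries with identical totals, and these are exactly the states the proposition is about. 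Even for bare $(\alpha,x)$ lists, pointwise agreement under $\operatorname{getAsset}$ does not force list equality. So your \texttt{funext}/\texttt{rfl} script would not go through. A symptom of the collapse is that once you have $W_1=W_2$, your key observation (that the trace never mentions $\mathcal{L}$, $\mathcal{H}$, or $\mathcal{I}$) does no work, since equal inputs give equal outputs for any function at all.

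\emph{The fix.} Reverse the order of your two steps. First show that $\operatorname{externalTrace}$ factors through the projection to $T$. If you want balances in the trace, let it factor through $\bigl(\pk,T,\nu,\alpha\mapsto\operatorname{totalBalance}(\operatorname{getAsset}(\cdot,\alpha))\bigr)$ instead. Then apply the hypotheses to that projection, without ever asserting equality of the full public states.

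Your closing remark is correct and is the right link to the privacy goal: \textsf{InternalTransfer} leaves $\pk$, $\mathcal{O}$, $\nu$ and the on-chain totals unchanged, which matches the paper's main-text sketch.
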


\emph{Proof Sketch.}
\textsf{InternalTransfer} updates only $(\mathcal{L},\mathcal{H})$, never $\mathcal{O}$ or total on-chain balances. \textsf{OutboxProcessing} in Figure~\ref{fig:pass-symbolic-create} emits on-chain actions solely from $O$ (lines 2–4), hence the trace depends only on $\mathcal{O}$.

\emph{Corollary: EOA indistinguishability.}  
If $W_{\mathrm{pass}}$ and $W_{\mathrm{eoa}}$ share the same $\pk$, Outbox, and per-asset totals, then
\[
\mathrm{externalTrace}(W_{\mathrm{pass}})
=\mathrm{externalTrace}(W_{\mathrm{eoa}}).
\]
Thus, from an external observer’s perspective, a \sysname account is indistinguishable from a standard EOA: internal transfers leave no observable footprint.

\begin{proposition}[Asset Accessibility]
    For any asset $\alpha$, letting $L=\{u\mid \mathcal{L}[u][\alpha]>0\}$,
$$\sum_{u\in L}\mathcal{L}[u][\alpha]=\mathrm{totalBalance}(\alpha),
$$
and each $u\in L$ can spend its full balance $\mathcal{L}[u][\alpha]$
if and only if the following predicate holds:
\[
\textsf{checkAllow}(u,\alpha,\mathcal{L}[u][\alpha],\mathcal{H}).
\]
\end{proposition}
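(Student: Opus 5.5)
The plan is to prove both parts of the statement by induction over reachable states of the transition system in Figure~\ref{fig:pass-symbolic-create}. We interpret $\mathrm{totalBalance}(\alpha)$ as the sum of $\mathcal{L}[u][\alpha]$ over all subaccounts, the internal quantity that the consistency principle of Section~\ref{sec:design-highlevel} equates with $B_{\mathrm{on}}(\alpha)$ net of pending Inbox and Outbox entries.

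\textbf{The sum identity.} Under this reading the first claim is a sum-splitting argument. Since $\mathcal{L}[u][\alpha]\in\mathbb{Z}_{\ge 0}$, every $u\notin L$ satisfies $\mathcal{L}[u][\alpha]=0$. Hence
\[
\sum_{u\in\mathcal{U}}\mathcal{L}[u][\alpha]=\sum_{u\in L}\mathcal{L}[u][\alpha]+\sum_{u\notin L}0 .
\]
In Lean the ledger is a finite map with default value $0$. So I would state this as a lemma about filtering a finite sum by the predicate $\mathcal{L}[u][\alpha]>0$ (Finset.sum\_filter together with the fact that the filtered-out terms vanish). No induction is needed for this part.

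\textbf{The conservation invariant.} To connect the sum to the wallet's external accounting, I would also prove, by induction on operations, that
\[
\sum_u\mathcal{L}[u][\alpha]+\sum_{(\alpha,x,m)\in\mathcal{I}}x+\sum_{(\alpha,x,\cdot,\cdot)\in\mathcal{O}}x
\]
changes only by \emph{InboxDeposit} (an increase) and by \emph{Outbox Processing} (a decrease). The cases go as follows. \emph{ClaimInbox} moves $x$ from $\mathcal{I}$ to $\mathcal{L}$. \emph{InternalTransfer} subtracts $x$ at $u_s$ and adds it at $u_r$; if $u_s=u_r$ one must check the update order, and the guard $\mathcal{L}[u_s][\alpha]\ge x$ keeps the natural-number subtraction exact. \emph{Withdraw} moves $x$ from $\mathcal{L}$ to $\mathcal{O}$. \emph{SignGSM} changes nothing.

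\textbf{The spendability equivalence.} For the second claim I unfold the guards of \emph{InternalTransfer} and \emph{Withdraw}. An operation by $u$ with amount $x=\mathcal{L}[u][\alpha]$ returns success iff $\mathcal{L}[u][\alpha]\ge x$ and $\textsf{checkAllow}(u,\alpha,x,\mathcal{H})$ hold. The first conjunct is reflexively true at $x=\mathcal{L}[u][\alpha]$. So success is equivalent to $\textsf{checkAllow}(u,\alpha,\mathcal{L}[u][\alpha],\mathcal{H})$, giving both directions. The ``if'' direction exhibits the successful post-state. The ``only if'' direction does case analysis on the \texttt{if} in the definition, showing that a $\bot$ result rules out success.

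\textbf{Main obstacle.} The hard part is the bookkeeping, not the logic. That means fixing precisely what $\mathrm{totalBalance}$ denotes and handling the $u_s=u_r$ aliasing in map updates. If the intended meaning of $\mathrm{totalBalance}$ is the on-chain $A(\alpha)$, I would add the hypothesis that $\mathcal{I}$ and $\mathcal{O}$ carry no pending $\alpha$-entries and derive the claim from the conservation invariant above.
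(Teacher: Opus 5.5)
Your proof is correct and follows the paper's argument. $L$ collects exactly the nonzero entries, so the sum over $L$ equals $\mathrm{totalBalance}(\alpha)$, which the paper likewise defines as the sum over the balance map; and the balance check at $x=\mathcal{L}[u][\alpha]$ holds by reflexivity of $\ge$. The conservation-invariant induction is unnecessary here, since the paper's argument is purely definitional, but your unfolding of the \emph{InternalTransfer}/\emph{Withdraw} guards handles the $\textsf{checkAllow}$ equivalence more directly than the paper's appendix, which only checks that the balance is sufficient.
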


\emph{Proof Sketch.} \textsf{ClaimInbox} credits $\mathcal{L}$; \textsf{InternalTransfer} preserves the sum over $u$; \textsf{Withdraw} debits $\mathcal{L}$ and enqueues to $\mathcal{O}$, which \textsf{OutboxProcessing} realizes on-chain. \textsf{checkAllow} enforces reachability from provenance in $\mathcal{H}$.

\begin{proposition}[Provenance Integrity]
   Let $\mathrm{extDep}(\alpha)$ be net external deposits (deposits minus realized withdrawals). For all reachable states,
\[
\sum_{u \in \mathcal{U}}\mathcal{L}[u][\alpha]\ =\ \mathrm{extDep}(\alpha).
\]
\end{proposition}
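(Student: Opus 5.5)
We will prove the equality by induction on the length of the execution reaching the state, reading $\mathrm{extDep}(\alpha)$ the way the Lean model computes it from the provenance log. We define $\mathrm{extDep}(\alpha)$ as a function of $\mathcal{H}$:
\[
\mathrm{extDep}_{\mathcal{H}}(\alpha)\;=\;\sum_{(\texttt{claim},\alpha,x,\cdot,\cdot)\in\mathcal{H}} x\;-\;\sum_{(\texttt{withdraw},\alpha,x,\cdot,\cdot)\in\mathcal{H}} x .
\]
Under this reading, a deposit counts once it is credited into the wallet, i.e.\ claimed from $\mathcal{I}$. A withdrawal counts as realized once it is debited and committed to $\mathcal{O}$, since it is then irrevocably scheduled for broadcast by \textsf{OutboxProcessing}. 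This choice is essential, because the stated equality (rather than an inequality) holds exactly for this quantity. Unclaimed Inbox entries and not-yet-broadcast Outbox entries are deliberately excluded, consistent with Figure~\ref{fig:pass-symbolic-create}, where $\mathcal{H}$ is appended exactly on \textbf{ClaimInbox}, \textbf{InternalTransfer} and \textbf{Withdraw}.

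\emph{Base case.} A state produced by \textbf{CreatePassWallet} has $\mathcal{L}=\{\}$, so every $\mathcal{L}[u][\alpha]=0$. Its log $\mathcal{H}$ contains no claim or withdraw records, so both sides are $0$. GSM domain assets are endowed at key generation and are not tokens, so we restrict the statement to $\alpha\in\mathcal{A}_{\mathrm{tokens}}$, as in the consistency equation of Section~\ref{sec:asset-model}.

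\emph{Inductive step.} We take a reachable $S$ satisfying the invariant and a transition $S\to S'$, and case-split on the operation.
\begin{icompact}
\item \textbf{InboxDeposit} changes only $\mathcal{I}$, so neither side changes.
\item \textbf{ClaimInbox}$(\alpha',x,m)$ by $u$ adds $x$ to $\mathcal{L}[u][\alpha']$ and appends a claim record of amount $x$ for $\alpha'$. Both sides increase by $x$ when $\alpha=\alpha'$, and neither changes otherwise.
\item \textbf{InternalTransfer} leaves the left side invariant because $-x+x=0$, which needs $u_s\neq u_r$ or a direct check that the update is an identity when they coincide. It appends only a \texttt{transfer} record, which does not affect $\mathrm{extDep}$.
\item \textbf{Withdraw} succeeds only when $\mathcal{L}[u_s][\alpha']\ge x$. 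Natural-number subtraction therefore does not truncate, so the sum drops by exactly $x$, and the appended withdraw record decreases $\mathrm{extDep}(\alpha')$ by $x$.
\item \textbf{SignGSM} and \textbf{OutboxProcessing} touch neither $\mathcal{L}$ nor $\mathcal{H}$.
\item Failed operations returning $\bot$ leave the state unchanged.
\end{icompact}

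The main obstacle is the bookkeeping of the finite sum over $\mathcal{U}$ under a pointwise update of $\mathcal{L}$. We need a lemma stating that updating a single entry $\mathcal{L}[u][\alpha]$ by $+x$ (respectively $-x$ with $x\le\mathcal{L}[u][\alpha]$) changes $\sum_u\mathcal{L}[u][\alpha]$ by exactly $+x$ (respectively $-x$) and leaves the sums for other assets unchanged. In Lean we would first prove this as a \texttt{Finset.sum} update lemma, splitting off the term $u$ via \texttt{Finset.sum\_erase} or working in $\mathbb{Z}$ to avoid truncated subtraction. We would then apply it twice for transfers, once for the debit and once for the credit, which also covers the case $u_s=u_r$.
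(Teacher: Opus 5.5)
Your induction is sound for the quantity you define, and its skeleton matches the paper's: a base case at \textbf{CreatePassWallet}, then a case split over deposit, claim, transfer and withdraw. The substantive difference is where you place the accounting boundary for $\mathrm{extDep}$.

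The paper's sketch increments $\mathrm{extDep}(\alpha)$ at \textsf{InboxDeposit} and decrements it only once \textsf{OutboxProcessing} broadcasts. The appendix version of the theorem likewise takes $\mathtt{externalDeposits}(\alpha)$ to be recorded at the Inbox. So your remark that you read it ``the way the Lean model computes it from the provenance log'' is not how the paper describes its model.

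Under the paper's reading the equality is false whenever $\mathcal{I}$ or $\mathcal{O}$ holds an $\alpha$-entry. For example, right after an unclaimed deposit the left side is unchanged while the right side has grown by $x$. This is consistent with the appendix, which proves only $\le$. Your redefinition is exactly what makes an equality provable, but it also changes what the equality says. With both sides read off $\mathcal{H}$, it certifies only that $\mathcal{L}$ agrees with the claim and withdraw records in its own log. It does not show that those claims are backed by external deposits, which is the provenance content the proposition is meant to carry.

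To recover that content, prove two more bookkeeping identities by the same induction, treating $\mathcal{I}$ as a multiset and $\mathcal{O}$ as a queue as in the high-level model:
\begin{enumerate}
\item Each successful \textsf{ClaimInbox} removes a distinct Inbox entry that some \textsf{InboxDeposit} created. Hence the total deposited amount of $\alpha$ equals the claimed amount recorded in $\mathcal{H}$ plus the amount $\iota_\alpha$ still pending in $\mathcal{I}$.
\item Each \texttt{withdraw} record corresponds to exactly one Outbox entry, later dequeued by \textsf{OutboxProcessing}. Hence the withdrawn amount recorded in $\mathcal{H}$ equals the broadcast amount plus the amount $o_\alpha$ still pending in $\mathcal{O}$.
\end{enumerate}
Combined with your equality, these yield $\sum_{u}\mathcal{L}[u][\alpha]+\iota_\alpha+o_\alpha=\mathrm{extDep}(\alpha)$ in the paper's sense. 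That gives the paper's $\le$, with equality exactly when nothing in $\alpha$ is pending. This three-term invariant is also what the paper's own claim step tacitly uses, since $\sum_u\mathcal{L}[u][\alpha]\le\mathrm{extDep}(\alpha)$ by itself does not survive adding $x$.

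Two of your choices address points the paper skips: handling $u_s=u_r$ via sequential updates, and excluding GSM assets, whose endowment at key generation would break the base case. Likewise, make sure the sum ranges over the support of $\mathcal{L}$, or that $\mathcal{U}$ contains every recipient. Otherwise a credit to a fresh subaccount is invisible to $\sum_{u\in\mathcal{U}}$.
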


\emph{Proof Sketch.} Induct on transitions.
\textsf{InboxDeposit} increases $\mathrm{extDep}$ only, and \textsf{ClaimInbox}
moves already-deposited value into $\mathcal{L}$.
\textsf{InternalTransfer} conserves the sum over $u$.
\textsf{Withdraw} decreases the sum, and after \textsf{OutboxProcessing}
broadcasts, $\mathrm{extDep}$ is reduced accordingly.

\textbf{General signable messages (GSM).}
Model GSM rights as a domain-scoped virtual asset class $\mathcal{A}_{\mathrm{GSM}}$. Let $\alpha=(\mathrm{dom})\in\mathcal{A}_{\mathrm{GSM}}$ and permit
\[
\textsf{SignGSM}(u,\mathrm{dom},m)\ \ \text{iff}\ \ \textsf{Allow}(u,\alpha,1).
\]
Then (P2)–(P3) lift to GSM: there is always at least one domain signer, and GSM “balances” obey provenance conservation under transfers in $\mathcal{H}$.

\subsection{Liveness}

We define liveness as the property that any authorized operation remains eventually executable, despite enclave crashes, vendor compromise, or partial governance failure. Our model assumes deployment of \sysname access logic on dstack, implemented in Section~\ref{sec:practice} and detailed in Appendix~\ref{sec:dstack}.

Let $\mathcal{E}$ denote active enclaves, $\mathcal{C}$ containerized wallet backends, and $\mathcal{P}$ governance policies.  
Each container $c\in\mathcal{C}$ stores sealed state
\[
\sigma_c = \langle S,k_c,\mathsf{meta}\rangle
\quad\text{with}\quad
S=(\pk,\sk,\nu,\mathcal{I},\mathcal{O},\mathcal{L},\mathcal{H}),
\]
where $k_c=\mathsf{Derive}(\mathcal{K},c)$ is derived from an MPC-managed Key Management Service (KMS) and $\mathsf{meta}$ records attestation.  
An operation $\mathsf{op}\in\mathsf{Ops}$ (deposit claim, transfer, withdrawal) executes as:
$\mathsf{Exec}(c,\mathsf{op}) = 
\top \ \text{iff}\ \mathsf{Attested}(c)\wedge \Gamma(S)$,
where $\mathsf{Attested}(c)$ checks $\sigma_c$ against registry $\mathcal{R}$ and $\Gamma$ is the quorum predicate over $\mathcal{P}$. To ensure vendor-agnostic portability, state migration $\mu:\mathcal{E}\to\mathcal{E}$ satisfies
\[
\forall e_i,e_j\in\mathcal{E},\quad
e_j.\mathsf{decrypt}(\sigma_{c,e_i},k_c)=\sigma_{c,e_i},
\]

\begin{theorem}[Liveness Theorem]
If (i) the MPC KMS tolerates $f_{\mathrm{KMS}}$ faults in a $t$-of-$n$ threshold,  
(ii) governance quorums are $f_{\mathrm{gov}}$-resilient, and  
(iii) at least one enclave is uncompromised,  
then every $\mathsf{op}\in\mathsf{Ops}$ executes within bounded time $\tau$:
\[
\forall \mathsf{op}\in\mathsf{Ops},\ \exists \tau\ge0:\ \mathsf{Exec}(c,\mathsf{op})=\top.
\]
\end{theorem}

\textbf{Proof Sketch.}
Progress follows from three guarantees:  
(1) the MPC KMS ensures $k_c$ is always recoverable under threshold availability;  
(2) $\Gamma$ encodes quorum rules that eventually relax after timeout $\Delta t$, ensuring some subset $S$ satisfies $\Gamma(S)=\top$;  
(3) migration $\mu$ allows any enclave to re-host $\sigma_c$, preventing vendor lock-in or enclave failure.  
Algorithm~\ref{alg:recover-and-execute} formalizes recovery and execution under these assumptions.

\begin{algorithm}[H]
\caption{\textsf{RecoverAndExecute}$(c,\mathsf{op})$}
\label{alg:recover-and-execute}
\footnotesize
\begin{algorithmic}[1]
\State Verify enclave $e$ attestation $Q_e$ against $\mathcal{R}$.
\State Derive $k_c \gets \mathsf{Derive}(\mathcal{K},c)$; decrypt $\sigma_c$.
\If{$\mathsf{CompromiseDetected}(e')$} \State Rotate $k_c$ and re-encrypt $\sigma_c$. \EndIf
\State Ensure $\Gamma(S)=\top$ for $S\subseteq\mathcal{P}(c)$.
\State Execute $\mathsf{op}$; append to $\mathcal{H}$.
\end{algorithmic}
\end{algorithm}

\textbf{Key Properties.}
(1) \emph{Portability:} any attested enclave can recover $\sigma_c$, independent of vendor, given that $k_c$ is managed by KMS; 
(2) \emph{Monotonic Recovery:} recovery steps form $\rho:\mathcal{S}\to\mathcal{S}$ with $\rho(s)\succeq s$; 
(3) \emph{Domain Continuity:} certificates remain bound to $\mathcal{R}(c)$ under migration $\mu(c)$, preventing hijack or replay.

Implementation details of container orchestration, attestation, and KMS protocols are given in Appendix~\ref{sec:dstack}.

%% file: Sections/practice.tex
\section{Practical Implementations}\label{sec:practice}

\subsection{Prototype Implementations}

To demonstrate feasibility, we have implemented a prototype of \sysname that captures the core architectural components.\footnote{Prototype and deployment instructions can be found here: \url{https://github.com/jayyu23/pass-wallet-app}} The prototype consists of:

\begin{icompact}
    \item \textit{Provenanced Access Model}, the core \sysname architecture outlined in Section~\ref{sec:design} with the Inbox–Outbox structure, asset ledgers, and provenanced rule as a Rust executable.
    \item \textit{Web Application Frontend}, a Typescript interface that allows users to deposit assets, initiate internal transfers, generate on-chain transactions, and sign arbitrary messages on Ethereum Sepolia testnet via WalletConnect v2.
    \item \textit{Multi-vendor TEE deployment}, we deploy the \sysname enclave executable in both AWS Nitro Enclaves and Intel TDX with the dstack framework.
\end{icompact}



We evaluate these two enclave deployments as complementary paths with distinct tradeoffs. \textbf{AWS Nitro Enclaves} simplify deployment through AWS integration and attestation, but concentrate trust in a single vendor and store the provenance ledger externally, potentially weakening liveness and asset accessibility guarantees. \textbf{Intel TDX with dstack} provides a portable, open-source runtime across diverse hardware with decentralized operators and MPC-based key management, preserving full security properties by executing both Inbox–Outbox logic and provenance ledger inside the enclave, though at the cost of added governance complexity for operator authorization and policy enforcement.

\textbf{Blockchain Connectivity.}
To ensure reliable interaction, \sysname embeds Helios~\cite{helios_lightclient}, a consensus-verified light client within the TEE, enabled via dstack deployment. 
Helios verifies blockchain data against consensus proofs rather than trusting external RPC endpoints, removing reliance on centralized providers while preserving real-time performance. 
This guarantees that \sysname's Outbox operates only on verified state, even if external network providers are compromised.

\subsection{Performance Benchmarks}\label{sec:benchmarks}

\textbf{Methodology.} To evaluate \sysname, we benchmarked its performance using both AWS Nitro Enclaves and Intel TDX as secure backend environments. We selected a suite of representative wallet operations—wallet creation, balance queries, claims, transfers, withdrawals, provenance queries, and a comprehensive end-to-end workflow\footnote{Complete sequence: deposit $\rightarrow$ claim $\rightarrow$ transfer $\rightarrow$ withdraw $\rightarrow$ process outbox}. To further stress-test \sysname, we conducted batch experiments involving $10,000$ consecutive deposit and claim operations, as well as transfer-then-withdraw scenarios utilizing multi-threaded processing.
Each operation was run $100$ times over $10$ trials per environment, measuring throughput in operations per second (ops/sec).
Both environments ran on identical virtual machines (4 cores, 2.5 GHz, 16 GB RAM) to ensure a fair comparison. Table~\ref{tab:benchmarks} summarizes our results, with full details available in Appendix~\ref{appendix:benchmark-results}.

\begin{table*}[h]
\centering
\caption{Internal \sysname operation throughput (ops/sec) across AWS Nitro Enclaves and Intel TDX. All operations are in-memory state changes within the TEE with no blockchain interaction. Each value represents the average of 100 operations repeated 10 times per environment.}
\label{tab:benchmarks}
\footnotesize
\begin{tabular}{lrr}
\toprule
\textbf{Operation} & \textbf{AWS Nitro} & \textbf{Intel TDX} \\
\midrule
Wallet Creation & 6,531 & 14,970 \\
Account Balance Queries & 530,943 & 935,428 \\
Inbox-to-Subaccount Claims & 14,856 & 33,064 \\
Internal Subaccount Transfers & 32,953 & 71,292 \\
Withdrawal Request Creation & 18,291 & 39,702 \\
Transaction History Queries & 3,757 & 7,685 \\
End-to-End Workflow & 35,663 & 78,433 \\
\midrule
Batch Deposit \& Claim & 481 & 724 \\
Multi-threaded Operations (5 threads) & 32,189 & 55,124 \\
\bottomrule
\end{tabular}
\end{table*}

\textbf{Analysis.} The observed performance differences primarily stem from their distinct hardware architectures. AWS Nitro Enclaves employ a disaggregated system design, where the host server and the TEE hardware are physically separated. Communication occurs via vsock, and the enclave itself runs on a dedicated, lower-performance coprocessor. In contrast, Intel TDX leverages a special instruction set to provide enclave functionality directly on the main processor, avoiding the overhead of data transfer. As a result, Intel TDX is able to deliver significantly higher throughput for in-enclave operations compared to AWS Nitro Enclaves.

These preliminary benchmarks show that both Nitro and TDX backends can sustain high throughput for critical wallet operations, confirming that \sysname is practical for real-world deployment.

\textbf{Comparison with Related Work.}
To contextualize our performance results, we compare \sysname with other TEE-based blockchain systems and traditional custodial solutions.
Most existing wallet solutions operate under entirely different trust and security models that preclude meaningful performance comparisons. Hardware wallets prioritize air-gapped security over throughput, while custodial services sacrifice transparency for performance optimizations that \sysname cannot adopt without compromising its core security guarantees. Additionally, no prior work combines TEE-based execution with comprehensive provenance tracking and rule-based access controls, making \sysname's architecture fundamentally distinct from existing approaches.

\textit{Liquefaction.} The Liquefaction system~\cite{Austgen2024Liquefaction} represents the closest architectural parallel to \sysname, yet demonstrates significantly higher latencies. Liquefaction exhibits up to 1,036.9 seconds for finalized confirmation due to its dependence on external blockchain oracles for operation authorization. This design choice fundamentally limits scalability and user experience. In contrast, \sysname's innovative Inbox–Outbox architecture enables all internal operations to execute within the TEE without blockchain dependencies, achieving sub-millisecond latencies while maintaining equivalent security guarantees through cryptographic auditability.

\textit{Traditional Custodians.} While custodial solutions do not publish detailed performance benchmarks, they achieve high throughput by centralizing trust and sacrificing transparency—precisely the problems \sysname was designed to solve. Our TEE-based approach introduces only 5-7\% performance overhead compared to native execution~\cite{tdxperformance}, yet delivers fundamentally superior security properties: cryptographic auditability, decentralized operation, elimination of single points of failure, and user-controlled access policies. \sysname thus achieves the performance characteristics of centralized systems while providing the security guarantees traditionally associated only with self-custody solutions.

%% file: Sections/applications.tex
\section{Applications}\label{sec:applications}

\input{Sections/Figures/pass-applications}

\sysname is a powerful primitive for applications across three major blockchain
verticals: access control, custody \& compliance, and scalability.
Each vertical captures a distinct research area, while provenance-based
subaccounts provide a unifying abstraction.

\textbf{Access Control.} \sysname's multi-user subaccount model enables flexible permission scopes and safe delegation. As AI agents begin managing crypto assets, there is a clear need to constrain their authority. With \sysname, an agent can be granted a subaccount holding only limited funds or specific signing rights, allowing it to operate autonomously without risking the entire wallet; if compromised, damage is capped to that subaccount. Similarly, DAO voting can be delegated by granting signature rights restricted to governance DApps, without granting spending authority. Organizations can also provision \emph{identity subaccounts} that map to individuals or teams under a shared ENS or corporate identity, each with scoped balances and signing authority tracked through provenance.

\textbf{Custody \& Compliance.} Internal privacy, built-in permissioning, and auditable provenance make \sysname well-suited to enterprise and regulatory contexts. A flagship application is \emph{private payroll}: on-chain salary payments today risk leaking sensitive information such as compensation levels or departmental budgets. With \sysname, salaries are distributed privately as internal subaccount transfers, leaving a verifiable provenance trail for audits, but without exposing per-employee details on-chain. Only periodic batched payouts exit via the Outbox, balancing privacy with regulatory accountability. Beyond payroll, compliant organizational wallets can allocate explicit departmental budgets while enabling regulator-grade proofs derived from the provenance log, and B2B netting allows vendors to settle obligations internally, with only net amounts leaving the wallet to reduce leakage while maintaining auditability.

\textbf{Scalability.} Finally, \sysname supports blockchain scalability use cases where per-transaction gas costs or cross-domain coordination overhead are prohibitive. A flagship example is \emph{gasless appchains}: high-throughput applications such as central limit order books (CLOBs) can internalize matching and settlement as subaccount transitions, with the Outbox emitting only periodic, verifiable batch settlements. Similarly, cross-chain bridges can leverage the Inbox–Outbox abstraction to enforce provenance across domains and ensure FIFO nonce discipline to prevent replay, while L2 composability is achieved by treating each rollup or sidechain as a scoped subaccount, allowing private balance transfers internally and settling externally only on domain exits.

%% file: Sections/Figures/pass-applications.tex
\begin{table*}[h!]
  \renewcommand{\arraystretch}{1.3}
  \centering
  \scriptsize
  \begingroup
    \setlength{\tabcolsep}{4pt}
    \begin{NiceTabular}{%
      >{\columncolor{white}}p{1.4cm}
      >{\centering\arraybackslash}p{1.9cm}
      >{\raggedright\arraybackslash}p{2.8cm}
      >{\raggedright\arraybackslash}p{4.3cm}
    }[color-inside]
      \CodeBefore
        \rowcolors{2}{gray!20}{}
      \Body
      \toprule
      \rowcolor{gray!50}
      \textbf{Vertical} & \textbf{Application} & \textbf{Context} & \textbf{PASS Enables} \\
      \midrule

      \multirow{3}{*}{\shortstack{Access\\Control}}
        & \textbf{AI Agent Wallets}
          & Autonomous agents interacting with crypto assets~\cite{walters2025eliza} require bounded signing authority.
          & Scoped subaccounts grant limited balances and domain-specific rights, enabling safe delegation without exposing the root key. \\

        & DAO Voting Delegation
          & DAOs depend on address-based voting and proxy delegation~\cite{ethereum-daos}.
          & Fine-grained voting rights delegation auditable via provenance; delegation can be restricted to specific DApps or domains. \\

        & Identity Subaccounts
          & Organizations map ENS or organizational identities to individuals or teams.
          & Each subaccount holds scoped balances and signing rights, with provenance tracking under a shared organizational identity. \\
      \midrule

    \multirow{3}{*}{\makecell{Custody \&\\Compliance}}
        & \textbf{Private Payroll}
          & On-chain payrolls reveal sensitive salary information~\cite{auer2023bankingbis}.
          & Salary distribution through internal gasless transfers with auditable logs; external payouts are batched via the Outbox. \\

        & Compliant Org Wallets
          & Firms must maintain accountability and BSA/AML recordkeeping~\cite{fincen2020ThresholdRule}.
          & Departmental and individual subaccounts with explicit budgets; provenance ensures regulator-grade auditability and enables external proof exploration without disclosing private flows. \\

        & Private B2B Netting
          & Supply-chain and recurring multi-party payments require confidentiality.
          & Vendor subaccounts allow internal netting of obligations, with \sysname acting as a clearing house. Only net settlements exit via the Outbox, minimizing leakage while retaining auditability. \\
      \midrule

      \multirow{3}{*}{Scalability}
        & \textbf{Gasless Appchains}
          & High-throughput applications (e.g., trading engines) face prohibitive gas costs on-chain.
          & Internalize application logic as subaccount transitions; Outbox emits verifiable batch settlements, ensuring correctness with minimal gas. \\

        & Cross-chain Bridges
          & Bridges require strong custody and coordination of operators~\cite{wormhole_guardians}.
          & Inbox–Outbox abstractions enforce provenance across domains; FIFO nonce handling prevents replay or race conditions. \\

        & L2 Composability
          & Rollups and L2s fragment assets and user experience~\cite{rollupsurvey2022}.
          & Treat each L2 as a scoped subaccount; maintain private balance transfers internally, and settle externally only when crossing domains. \\

      \bottomrule
    \end{NiceTabular}
  \endgroup
  \caption{Applications of PASS across three blockchain verticals: access control, custody \& compliance, and scalability. Each vertical highlights flagship use cases enabled by provenance-based subaccounts.}
  \label{tab:pass-applications}
\end{table*}

%% file: Sections/future.tex
\section{Discussion and Future Work}\label{sec:future-work}
As \sysname matures beyond a single-wallet instantiation, several directions emerge for strengthening its composability, verifiability, and practical performance. Appendix~\ref{sec:design-extensions} provides a high-level construction of extension designs for \sysname.

\textbf{Composability and Inter\sysname Transactions.} Beyond single-wallet privacy and provenance, a key challenge is enabling secure interaction across wallets in supply-chain or inter-organizational settings. The trivial method of direct EOA transfers compromises privacy, so we envision each \sysname instance as a PCN (Payment Channel Network) hub, similar to XTransfer \cite{aumayr2025xtransfer}, with a central Hub aggregating signed payment intents $\mathcal{T} := {(a_i, x_i, s_i, r_i)}_{i=1}^n$, computing a netted settlement $\mathcal{S}$, and updating balances via a multi-party protocol akin to Thora \cite{cryptoeprint:thora2022}. Optimistically this remains off-chain and private, while disputes invoke atomic on-chain enforcement. This design leverages provenance guarantees for confidential, auditable inter-wallet transfers; full specification is left to future work.


\textbf{Instantiation on zkVM.} The current \sysname design primarily relies on trusted execution environments (TEEs) to safeguard keys and enforce Inbox and Outbox provenance rules, but this trust model depends on vendor attestation and hardware security, which may be insufficient in adversarial or regulatory contexts that demand independent verification. zkVMs provide a natural next step by enabling provenance-based control to be verified through succinct cryptographic proofs rather than trusted hardware. By batching transitions over $\mathcal{L}$, $\mathcal{H}$, $\mathcal{I}$, and $\mathcal{O}$ and producing a proof $\pi$, a minimalist verifier contract can update commitments and enforce outbox execution while preserving privacy. This direction complements the Chainless stack architecture~\cite{seong2025chainlessappsmodularframework}, suggesting a path toward modular, interoperable deployments of \sysname with stronger external verifiability and broader applicability.

\textbf{Expansion of Formal Model and Further Benchmarks.} We aim to extend our Lean 4 formalization from wallet logic to enclave assumptions, quorum dynamics, MPC-based recovery, and cross-vendor migration. Empirically, while alternatives (multisigs, ERC-4337, custodial systems) lack benchmarks, PASS already demonstrates throughput; we will expand evaluation to latency, concurrency, and vendor diversity. Together these efforts provide application-driven validation and stronger evidence for PASS as both formally sound and practically deployable.


%% file: Sections/conclusion.tex
\section{Conclusion}\label{sec:conclusion}


We presented \sysname, a Provenanced Access Subaccount System, as a new wallet architecture that enforces access by provenance rather than by identity or role. The Inbox–Outbox mechanism ensures all assets have a verifiable lineage, while internal transfers remain private and indistinguishable from a standard EOA account.  

Our contributions are threefold: (i) a formal model in Lean~4 proving privacy, accessibility, and provenance integrity; (ii) a working prototype with enclave backends on AWS Nitro Enclaves and Intel TDX via dstack, integrated with WalletConnect; and (iii) initial benchmarks showing the feasibility of high-throughput operation. These results demonstrate that provenance-based control can be realized today with mainstream blockchain tools. 

Looking forward, richer evaluations across multi-vendor TEEs, integration with zkVMs, and exploration of organizational and agent-based custody are promising directions. By combining rigorous formal guarantees with deployable prototypes, \sysname advances the design space between strict self-custody and flexible shared access, offering a path toward practical, privacy-preserving wallet security.

%% file: Sections/A-formal-verification.tex
\section{Formal Verification} \label{sec:formal-verification-extended}

We consider the formal model introduced in Section~\ref{sec:design} of \sysname as a state machine in Lean4. Our goal is to prove that, under the \sysname design, several critical security properties always hold. We outline these properties and our approach to verifying them.


\subsection{Property: Privacy of Internal Transfers}
\label{subsec:prop-privacy}

In the \sysname design, the \textsf{InternalTransfer} operation updates only the off-chain ledger $\mathcal{L}$ and the provenance history $\mathcal{H}$. It never modifies the Outbox $\mathcal{O}$, meaning there is no on-chain transaction triggered by an internal transfer. Consequently, external observers see no change in on-chain state. We formalize this by considering the on-chain view that an external party has for a given \sysname Wallet. We can model this visible on-chain state as a triple:
\[
{\mathcal{W}_{pub}} \;=\; \bigl(\pk,\; O,\; A\bigr),
\]
where:
\begin{itemize}[label={}, noitemsep, leftmargin=1.5em]
\item $\pk$ is the externally owned account (EOA) address,
\item $O = (T, \nu)$ is the external view of the \emph{outbox}, where $T$ is a list of enqueued transactions and $\nu$ is a nonce,
\item $A$ is the external view of the Asset ledger $\mathcal{L}$, containing a list of asset entries, each of the form $(\alpha, x)$ with a unique identifier $\alpha$ and total balance $x$.
\end{itemize}

\paragraph{External Trace.}
The \emph{external trace} of a PassAccount state $\mathcal{W}_{pub}$ is the list of externally visible actions, obtained by mapping each transaction in the outbox $T$ to an on-chain action:
\[
\operatorname{externalTrace}(\mathcal{W}_{pub}) \;=\; \bigl\{\, \operatorname{outboxTx}(t) \;\mid\; t \in T \bigr\}.
\]
No internal transfers appear in $\operatorname{externalTrace}(\cdot)$.

\paragraph{States That Differ Only by Internal Transfers.}
We say that two PassAccount states $\mathcal{W}_1 = (\pk_1,O_1,A_1)$ and $\mathcal{W}_2 = (\pk_2,O_2,A_2)$ \emph{differ only by internal transfers} if:
\begin{enumerate}
\item $\pk_1 = \pk_2$,
\item $O_1 = O_2$ (i.e.\ both the list of outbox transactions and the nonce are the same),
\item For every asset identifier $\alpha$, the total balance of that asset in $A_1$ equals the total balance of the same asset in $A_2$.
\end{enumerate}
Letting $\operatorname{getAsset}(\mathcal{W},\alpha)$ return the asset $\alpha$ within state $\mathcal{W}$ or $\bot$, we require $\forall\,\alpha$:
\[
\operatorname{totalBalance}\bigl(\operatorname{getAsset}(\mathcal{W}_1,\alpha)\bigr)
=
\operatorname{totalBalance}\bigl(\operatorname{getAsset}(\mathcal{W}_2,\alpha)\bigr).
\]

\begin{theorem} [Privacy of Internal Transfers]
If two PassAccount states $\mathcal{W}_1$ and $\mathcal{W}_2$ differ only by internal transfers, then their externally visible traces are identical:
\[
\operatorname{externalTrace}(\mathcal{W}_1) \;=\; \operatorname{externalTrace}(\mathcal{W}_2).
\]
\end{theorem}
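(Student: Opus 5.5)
The proof is a direct unfolding of definitions. The key observation is that $\operatorname{externalTrace}$, as defined just above, is a function of the Outbox component alone. Concretely, write $\mathcal{W}_i=(\pk_i,O_i,A_i)$ with $O_i=(T_i,\nu_i)$. Then
\[
\operatorname{externalTrace}(\mathcal{W}_i)=\{\operatorname{outboxTx}(t)\mid t\in T_i\},
\]
where $\operatorname{outboxTx}$ is a fixed function on Outbox entries $(\alpha,x,\mathit{extDst},\nu)$. So the trace factors as $\operatorname{externalTrace}=F\circ\pi_T$. Here $\pi_T$ projects $\mathcal{W}_{pub}$ onto the transaction list $T$, and $F$ maps $\operatorname{outboxTx}$ over that list; in Lean this is \texttt{List.map}.

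The first step is to state and prove this factorization as a small auxiliary lemma. In Lean it holds by \texttt{rfl} or by \texttt{simp} on the definition of \texttt{externalTrace}.

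The second step uses the hypotheses. Condition~2 of ``differ only by internal transfers'' gives $O_1=O_2$. Since $O_i=(T_i,\nu_i)$ is a pair, equality of pairs gives $T_1=T_2$ (and $\nu_1=\nu_2$, which we do not need). Rewriting with $T_1=T_2$ inside $F$ yields $\operatorname{externalTrace}(\mathcal{W}_1)=\operatorname{externalTrace}(\mathcal{W}_2)$ by congruence of function application. Conditions~1 and~3, equal $\pk$ and equal per-asset totals, are not used. They are only needed for the stronger indistinguishability corollary, where the public view also includes balances. It is worth noting explicitly that the theorem therefore holds under the weaker hypothesis $O_1=O_2$ alone.

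The only real obstacle is the bookkeeping of definitions, not any mathematical difficulty. I would check two things.
\begin{enumerate}
\item That $\operatorname{outboxTx}$ depends only on the Outbox entry $t$ and not on ambient state such as $\pk$ or $A$. If it used $\pk$, for example as the transaction sender, condition~1 would be needed: I would rewrite with $\pk_1=\pk_2$ as well, and the argument would again close by congruence.
\item That the trace is treated as a list, or a multiset, consistently on both sides, so that the congruence step applies directly.
\end{enumerate}

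If one instead wants the dynamic reading, that an \textsf{InternalTransfer} step leaves the trace unchanged, I would add a corollary. By inspecting the \textsf{InternalTransfer} rule in Figure~\ref{fig:pass-symbolic-create}, the step leaves $\pk$ and $\mathcal{O}$ unchanged. It also preserves $\sum_u\mathcal{L}[u][\alpha]$, since it subtracts $x$ from one entry and adds $x$ to another. So the pre-state and post-state differ only by internal transfers, and the theorem applies.
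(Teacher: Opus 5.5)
Your proposal is correct and follows the paper's proof: both derive $T_1=T_2$ from $O_1=O_2$ and conclude by congruence of $\operatorname{map}(\operatorname{outboxTx},\cdot)$, and neither uses the $\pk$ or per-asset-balance hypotheses. Your extra checks are sound but not needed for the statement as given: that $\operatorname{outboxTx}$ does not depend on ambient state, and that a single \textsf{InternalTransfer} step yields states that ``differ only by internal transfers.''
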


\begin{proof}
By assumption, $O_1 = O_2$. Hence both states have the same outbox transaction list $T$. Since
\[
\operatorname{externalTrace}(\mathcal{W}_i) \;=\; \operatorname{map}\bigl(\operatorname{outboxTx},\, T_i\bigr)
\quad \text{for }i\in\{1,2\},
\]
and $T_1 = T_2$, it follows that
$\operatorname{externalTrace}(\mathcal{W}_1)$ and $\operatorname{externalTrace}(\mathcal{W}_2)$ are identical.
\end{proof}

\noindent
Thus, any sequence of purely internal transfers, changing only the private ledger $\mathcal{L}$ in \sysname, is invisible to an external observer, preserving the privacy of how assets move within the wallet.

A corollary of this Internal Transfer property is that a \sysname account is indistinguishable from a regular EOA account.

\begin{corollary}[Indistinguishability from EOA]
Let $\mathcal{W}_{\mathrm{pass}}$ be a \sysname state that (i) has the same public key $\pk$, (ii) the same outbox, including nonce and transaction list, and (iii) the same total balances of each asset as a reference EOA state $\mathcal{W}_{\mathrm{eoa}}$. Then
\[
\operatorname{externalTrace}(\mathcal{W}_{\mathrm{pass}})
\;=\;
\operatorname{externalTrace}(\mathcal{W}_{\mathrm{eoa}}).
\]
In particular, an external observer cannot distinguish $\mathcal{W}_{\mathrm{pass}}$ from $\mathcal{W}_{\mathrm{eoa}}$.
\end{corollary}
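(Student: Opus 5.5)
The plan is to derive the corollary directly from the Privacy of Internal Transfers theorem, after viewing the reference EOA state as a public state of the same shape. First I will fix an encoding: an EOA state $\mathcal{W}_{\mathrm{eoa}}$ is given by the triple $(\pk, O, A)$. Here $O=(T,\nu)$ is the EOA's pending transaction list and nonce, and $A$ lists its on-chain per-asset balances. This is exactly the external view $\mathcal{W}_{pub}$ of a degenerate \sysname wallet with a single subaccount, whose internal ledger $\mathcal{L}$ assigns every asset balance to that root subaccount. With this identification, $\operatorname{externalTrace}(\mathcal{W}_{\mathrm{eoa}})$ has the same definition as for any PassAccount state, namely $\operatorname{map}(\operatorname{outboxTx},T)$.

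Next I will check the three clauses of ``differ only by internal transfers'' for the pair $(\mathcal{W}_{\mathrm{pass}},\mathcal{W}_{\mathrm{eoa}})$. Clause 1 is hypothesis (i), the equal public key. Clause 2 is hypothesis (ii), the equal outbox, meaning the same transaction list $T$ and nonce $\nu$. Clause 3 is hypothesis (iii), namely that for every $\alpha$, $\operatorname{totalBalance}(\operatorname{getAsset}(\mathcal{W}_{\mathrm{pass}},\alpha))=\operatorname{totalBalance}(\operatorname{getAsset}(\mathcal{W}_{\mathrm{eoa}},\alpha))$. I must take care here with assets that are absent, where $\operatorname{getAsset}$ returns $\bot$. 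I will adopt the convention $\operatorname{totalBalance}(\bot)=0$, so that hypothesis (iii) is read pointwise over all identifiers. Applying the theorem then gives equality of the external traces.

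For the final sentence, I will argue that an external observer sees only $\operatorname{externalTrace}$ together with the public triple, and these coincide. Therefore any distinguisher's view is identical on the two states, and its output distribution is the same.

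The main obstacle is the modeling step, not the logic. The theorem is stated for two PassAccount states, whereas $\mathcal{W}_{\mathrm{eoa}}$ is a plain EOA, so I must justify that an EOA embeds faithfully as a single-subaccount \sysname public state. In particular, the embedding must not alter the trace, since outbox transactions map to the same signed EOA transactions under $\sk$. I expect this to reduce to noticing that $\mathcal{W}_{pub}$ was defined without reference to $\mathcal{L}$ or $\mathcal{H}$. The proof is then essentially a one-line instantiation of the theorem, mirroring its Lean counterpart.
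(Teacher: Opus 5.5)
Your proposal is correct and matches the paper's proof. Both obtain the corollary by instantiating the Privacy of Internal Transfers theorem, with hypotheses (i)--(iii) supplying the three clauses of ``differ only by internal transfers''; your explicit embedding of the EOA as a single-subaccount public state $(\pk,O,A)$ and the convention $\operatorname{totalBalance}(\bot)=0$ make precise what the paper leaves implicit (and, as the theorem's proof shows, only the outbox equality (ii) is actually needed for the trace equality).
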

\begin{proof}
By assumption, the only potential differences between $\mathcal{W}_{\mathrm{pass}}$ and $\mathcal{W}_{\mathrm{eoa}}$ lie in internal ledger details. From Theorem~1, such internal transfers do not affect the externally visible outbox transactions or total on-chain balances. Thus, their external traces coincide, and $\mathcal{W}_{\mathrm{pass}}$ is indistinguishable from $\mathcal{W}_{\mathrm{eoa}}$.
\end{proof}


\subsection{Property: Asset Accessibility}

In addition to keeping internal transactions private, \sysname should ensure that for every asset in the system, there exists some sub-account that can access it and withdraw it from the system. This prevents a situation where funds get "stuck" without any reachable key or policy to move them. We define asset accessibility as follows:

\begin{theorem}[Asset Accessibility]
For any \sysname account $\mathcal{W}$ and asset identified by $\alpha$, there exists a list of addresses $L$ such that:
\begin{enumerate}
    \item For every address $u \in L$, the user can access their full balance: 
    \[\mathtt{checkBalance}(s, \alpha, u, \mathtt{balance}(u)) = \mathtt{true}\]
    \item The sum of all accessible balances equals the total balance of the asset:
    \[\sum_{u \in L} \mathcal{L}[u][\alpha] = \mathtt{totalBalance}(\mathtt{\alpha})\]
\end{enumerate}
\end{theorem}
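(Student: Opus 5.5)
We prove the statement by exhibiting $L$ explicitly: we take $L$ to be the list of subaccounts holding a strictly positive balance of $\alpha$,
\[
L \;=\; \mathrm{filter}\bigl(\lambda u.\ \mathcal{L}[u][\alpha]>0,\ \mathrm{keys}(\mathcal{L})\bigr),
\]
where $\mathrm{keys}(\mathcal{L})$ is the (finite, duplicate-free) list of subaccounts appearing in the ledger entry for $\alpha$. The proof then splits into the two clauses of the theorem, which we handle independently.

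For clause (2), we use the fact that $\mathtt{totalBalance}(\alpha)$ is, in the Lean model, defined as the sum of $\mathcal{L}[u][\alpha]$ over all subaccounts in the asset's ledger. The sum over $\mathrm{keys}(\mathcal{L})$ splits into the sum over $L$ and the sum over its complement. Every entry in the complement has balance $0$, because balances are natural numbers and are not $>0$. The complement therefore contributes nothing. The Lean steps are a lemma of the form $\sum_{\mathrm{filter}\,p\,l} f + \sum_{\mathrm{filter}\,\neg p\,l} f = \sum_l f$, followed by a proof that the second sum vanishes via $\neg(f\,u>0)\Rightarrow f\,u=0$. This is a routine list induction. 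If $\mathtt{totalBalance}$ is instead defined as a stored field rather than as a sum, we first need the invariant that this field equals the ledger sum. That invariant is proved by induction on transitions, exactly as in the proof sketch of Provenance Integrity: \textsf{ClaimInbox} adds $x$ to both sides, \textsf{InternalTransfer} conserves both sides, and \textsf{Withdraw} subtracts $x$ from both sides.

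For clause (1), we must show $\mathtt{checkBalance}(s,\alpha,u,\mathcal{L}[u][\alpha])=\mathtt{true}$ for each $u\in L$. The check has two conjuncts. The balance conjunct $\mathcal{L}[u][\alpha]\ge\mathcal{L}[u][\alpha]$ holds by reflexivity. The other conjunct is $\textsf{checkAllow}$, i.e.\ $\mathcal{H}.\mathrm{GetProvenance}(\alpha,u)\ge\mathcal{L}[u][\alpha]$. Here we need an invariant over reachable states: $\mathrm{GetProvenance}(\alpha,u)\ge\mathcal{L}[u][\alpha]$, and ideally equality, since provenance is the net amount credited to $u$ in $\mathcal{H}$. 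We prove it by induction on transitions:
\begin{itemize}
\item \textsf{ClaimInbox} appends a claim record crediting $u$ by $x$ and raises $\mathcal{L}[u][\alpha]$ by $x$;
\item \textsf{InternalTransfer} appends a record that debits the sender and credits the receiver by the same $x$ as the ledger update;
\item \textsf{Withdraw} debits both the sender's provenance and its ledger balance by $x$;
\item \textsf{InboxDeposit} and \textsf{SignGSM} change neither side.
\end{itemize}
The guarded precondition $\mathcal{L}[u_s][\alpha]\ge x$ ensures the natural-number subtractions do not truncate, so the invariant is preserved.

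The main obstacle is this provenance invariant. $\mathrm{GetProvenance}$ is a fold over the history list, so we must prove a lemma describing how that fold behaves when one record is appended. We must also handle the case distinctions $u=u_s$, $u=u_r$, and the self-transfer $u_s=u_r$, where both updates hit the same entry. We would first try to state the invariant as exact equality $\mathrm{GetProvenance}(\alpha,u)=\mathcal{L}[u][\alpha]$, which makes every inductive step a rewrite, and weaken it to $\ge$ only if truncated subtraction in the Lean definition breaks equality. Finally, since the theorem quantifies over any account $\mathcal{W}$, we must read it as ranging over reachable states, i.e.\ those produced from \textbf{CreatePassWallet} by transitions, where the invariant holds trivially at creation because $\mathcal{L}$ and $\mathcal{H}$ are empty.
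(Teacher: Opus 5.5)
Your choice of $L$ and your argument for clause~(2) match the paper's. The paper also takes $L=\{u\mid\mathcal{L}[u][\alpha]>0\}$, uses that $\mathtt{totalBalance}(\alpha)$ is \emph{defined} as the sum over the balance map, and observes that the omitted entries are zero. So your stored-field fallback is not needed.

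The divergence is in clause~(1). In the paper, $\mathtt{checkBalance}$ is only the sufficient-balance test. The clause therefore follows from $\mathcal{L}[u][\alpha]\ge\mathcal{L}[u][\alpha]$ alone, with no induction, and it holds for every account state, reachable or not. You read $\mathtt{checkBalance}$ as also containing $\textsf{checkAllow}$. That reading forces an inductive invariant $\mathrm{GetProvenance}(\alpha,u)\ge\mathcal{L}[u][\alpha]$ and a restriction to reachable states. For the statement as the paper means it, that machinery is superfluous. The reachability restriction also narrows ``for any account $\mathcal{W}$''; your reflexivity step already proves the theorem without it.

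What your route buys is a more meaningful result: the full withdrawal guard (balance \emph{and} provenance) passes for every holder. That is what the informal ``funds never get stuck'' claim really needs, whereas the paper's version is close to tautological.

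If you pursue the stronger version, look at the record $P(\alpha,u)$ in the paper's Provenance Integrity proof. It is only ever incremented, by claims and incoming transfers, and never debited on outgoing transfers or withdrawals. If $\mathrm{GetProvenance}$ behaves like that, exact equality fails and only your $\ge$ fallback survives, which still suffices.
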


\begin{proof}
We construct $L$ as the set of all addresses with non-zero balances in the balance map for $\alpha$:
\[L = \{u \; | \; \mathcal{L}[u][\alpha] > 0\}\]

For the first property, given any $u \in L$, the definition of $\mathtt{checkBalance}$ verifies if the user has sufficient balance for the specified amount. When checking against the user's own balance, this is trivially satisfied since $\mathcal{L}[u][\alpha] \geq \mathcal{L}[u][\alpha]$ for all $u$.

For the second property, we observe that $\mathtt{totalBalance}$ is defined as the sum of all balances in the asset's balance map:
\[\mathtt{totalBalance}(\alpha) = \sum_{u \in \mathcal{L}} \mathcal{L}[u][\alpha] \]

By construction, $L$ contains the only nonzero values $u \in \mathcal{L}$, so:
\[\sum_{u \in L} \mathcal{L}[u][\alpha] = \mathtt{totalBalance}(\alpha)\]

Therefore, both properties hold for our constructed list $L$.
\end{proof}

We can prove a similar trait for General Signable Messages, showing that for any inbound GSM, there is some subaccount signer that has access to it.

\begin{theorem}[GSM Accessibility]
For any GSM and domain, there exists a user with signing access to that domain.
\end{theorem}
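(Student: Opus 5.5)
The plan is an induction on reachable states, following the design in Section~\ref{sec:asset-model}. Fix a domain $\mathrm{dom}$ and its virtual asset $\alpha=(\mathrm{dom})\in\mathcal{A}_{\mathrm{GSM}}$. The invariant I will maintain is
\[
\mathcal{I}nv(S):\quad \sum_{u\in\mathcal{U}}\mathcal{L}[u][\alpha] = 1 \ \text{ and } \ \forall u,\ \mathcal{L}[u][\alpha]>0 \Rightarrow \textsf{checkAllow}(u,\alpha,1,\mathcal{H}).
\]
Once this invariant holds, the theorem follows from a pigeonhole step: a sum of nonnegative integers equal to $1$ has some term $u^\ast$ with $\mathcal{L}[u^\ast][\alpha]=1$. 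That $u^\ast$ passes the \textsf{Allow} check, so it has signing access, and \textsf{SignGSM} succeeds for it. This is also exactly the instance $x=1$ of the Asset Accessibility theorem just proved. So another way to finish is to take the list $L$ built there and note that $L$ is nonempty because the total is $1\neq 0$.

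\textbf{Base case.} At \textbf{CreatePassWallet}, the default signing authority assigns every domain asset to the root/creator subaccount. So $\mathcal{L}[\mathit{root}][\alpha]=1$, every other balance is $0$, and I treat the initial allocation as the provenance entry that makes \textsf{checkAllow} true for the root.

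\textbf{Inductive step.} I go through the operations in turn.
\begin{itemize}
\item \textsf{InboxDeposit}, \textsf{SignGSM}, and Outbox processing do not touch $\mathcal{L}[\cdot][\alpha]$, so the invariant carries over.
\item \textsf{InternalTransfer} conserves the sum, by the same argument as in Provenance Integrity. The receiver gains a provenance record (the appended \texttt{transfer} entry), so GetProvenance credits it.
\item \textsf{ClaimInbox} and \textsf{Withdraw} must be ruled out for GSM assets. No external deposit can mint a domain asset, and domain assets cannot be enqueued to $\mathcal{O}$. I will encode this as a side condition that restricts those operations to $\mathcal{A}_{\mathrm{tokens}}$.
\end{itemize}

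\textbf{Main obstacle.} The hardest step is the second conjunct: showing that nonzero balance implies \textsf{checkAllow}, i.e.\ that GetProvenance in $\mathcal{H}$ tracks $\mathcal{L}$. My first attempt will be to prove the stronger lemma $\mathcal{H}.\mathrm{GetProvenance}(\alpha,u)=\mathcal{L}[u][\alpha]$ for every reachable state, by induction on the same transitions. Each ledger update is paired with exactly one log append of matching shape. If the Lean definition of GetProvenance does not line up cleanly with this lemma, I will fall back to defining signing access as $\mathcal{L}[u][\alpha]\ge 1$, which reduces the result directly to the previous theorem.
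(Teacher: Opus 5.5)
Your argument is sound under the modeling choices you make explicit, but it takes a genuinely different and much heavier route than the paper. The paper's proof never touches $\mathcal{L}$ or $\mathcal{H}$. Signing access is given by a function $\mathtt{getSigner}(d)$, which returns the signer assigned to $d$ in a domain--address map if there is one, and the default (creator) signer otherwise. So $u=\mathtt{getSigner}(d)$ is a witness in every state simply because that function is total, and no induction over transitions is needed. You instead follow the prose of Section~\ref{sec:asset-model}, encode $(\mathrm{dom})$ as a unit virtual asset, and prove a conservation invariant. This buys a stronger and more meaningful result: exactly one subaccount holds each domain, and its authority is backed by $\mathcal{H}$. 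That actually substantiates the claim in Section~\ref{sec:properties} that accessibility and provenance conservation lift to GSMs, whereas the paper's default-signer proof says nothing about the provenance of delegated signing rights. The price is several assumptions that the paper's design makes unnecessary. First, you need an initial allocation of every domain to the root; this is not what \textbf{CreatePassWallet} does in Figure~\ref{fig:pass-symbolic-create}, and in a finite-map encoding it can only be realized through a default value, which is essentially the paper's fallback. Second, you need a synthetic genesis entry in $\mathcal{H}$. Third, you need a ban on \textsf{ClaimInbox} and especially \textsf{Withdraw} for GSM assets; without the latter, the root could withdraw its unique unit and your version of the theorem would be false. One adjustment: aim for $\mathcal{H}.\mathrm{GetProvenance}(\alpha,u)\ge\mathcal{L}[u][\alpha]$ rather than equality. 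If GetProvenance behaves like the record $P(\alpha,u)$ in the Provenance Integrity proof, which credits the receiver of a transfer but never debits the sender, equality fails after the first send. The inequality holds under either reading and is all your second conjunct needs.
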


\begin{proof}
For any domain $d$, we can directly construct a user $u$ who has signing access to $d$ as follows:
\[u = \mathtt{getSigner}(d)\]

By definition, $\mathtt{getSigner}$ returns either:
\begin{itemize}
    \item A specifically assigned signer if one exists in the domain-address map, or
    \item The default signer otherwise
\end{itemize}

Therefore, every domain is guaranteed to have at least one user with signing access, ensuring no GSM domain becomes inaccessible.
\end{proof}

\subsection{Property: Provenance Integrity}

Provenance integrity specifies that every asset held by any sub-account in \sysname must have an origin traceable to an external deposit into the Inbox. At any state of the system, for each asset entry in a sub-account’s balance, we can find a sequence of internal transfers that leads back to an Inbox deposit from an external address. This property ensures no sub-account can conjure funds from nothing, such as through double-spending or inflation within the system, and that the internal ledger is consistent with on-chain reality. In Lean, we model the state of the \sysname wallet including Inbox, Outbox, and all sub-accounts and prove an invariant that whenever a sub-account has $x$ units of asset $\alpha$, the global state contains a record of at least $x$ units of $\alpha$ having been deposited, minus any that have been withdrawn.

\begin{theorem}[Provenance Integrity]
\label{thm:provenance-integrity}

For any \sysname account 
$S = \bigl(\pk, \sk, \nu, \mathcal{I}, \mathcal{O}, \mathcal{L}, \mathcal{H}\bigr)$
and any asset $\alpha$, the total amount of $\alpha$ held across all subaccounts can never exceed the total amount of external deposits (minus any that have already been withdrawn). Formally,
\[
\sum_{u \in U} \mathcal{L}[u][\alpha]
\;\;\le\;\;
\mathtt{externalDeposits}(\alpha),
\]
where $U$ is the set of subaccounts in the system, and $\mathtt{externalDeposits}(\alpha)$ is the net total of asset $\alpha$ that has been introduced into the wallet from outside addresses, recorded in the inbox $\mathcal{I}$.
\end{theorem}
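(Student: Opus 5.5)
We prove a strengthened invariant by induction on the sequence of transitions of Figure~\ref{fig:pass-symbolic-create} from \textbf{CreatePassWallet}. The plain inequality is not inductive by itself, because \textsf{InboxDeposit} and \textsf{ClaimInbox} move value between the Inbox and the ledger. We therefore first fix a precise reading of $\mathtt{externalDeposits}(\alpha)$: the sum of amounts of all \textsf{InboxDeposit} events for $\alpha$, minus the sum of amounts of all \textsf{Withdraw} events for $\alpha$ recorded in $\mathcal{H}$. We then prove, for every reachable state and every $\alpha$, the conservation identity
\[
\sum_{u\in U}\mathcal{L}[u][\alpha] \;+\; \sum_{(\alpha,x,m)\in\mathcal{I}} x \;=\; \mathtt{externalDeposits}(\alpha).
\]
Since Inbox amounts are nonnegative naturals, the theorem's inequality follows immediately. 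This mirrors the equality form of Proposition~3 (Provenance Integrity), with the unclaimed Inbox mass made explicit.

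\textbf{Base case.} At creation, $\mathcal{L}=\{\}$, $\mathcal{I}=\varnothing$, and no deposits or withdrawals have occurred, so both sides are $0$. GSM assets are handled separately: their root endowment is treated as an initial deposit of $1$ per domain, or they are simply excluded from $\mathcal{A}_{\mathrm{tokens}}$.

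\textbf{Inductive step.} We case-split on the operation.
\begin{itemize}
\item \textsf{InboxDeposit}$(\alpha,x,m)$ raises both the Inbox sum and $\mathtt{externalDeposits}(\alpha)$ by $x$.
\item \textsf{ClaimInbox} removes $(\alpha,x,m)$ from $\mathcal{I}$ and adds $x$ to $\mathcal{L}[u][\alpha]$, so the left side is unchanged. This uses a multiset lemma: removing an element lowers the sum by exactly its amount. A failed claim returns $\bot$ and leaves the state unchanged.
\item \textsf{InternalTransfer} subtracts $x$ from $\mathcal{L}[u_s][\alpha]$ and adds $x$ to $\mathcal{L}[u_r][\alpha]$. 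The guard $\mathcal{L}[u_s][\alpha]\ge x$ makes natural-number subtraction exact, so the ledger sum is preserved. For assets $\beta\neq\alpha$ nothing changes.
\item \textsf{Withdraw} lowers the ledger sum by $x$ (again exactly, by the guard) and appends a withdraw record, lowering $\mathtt{externalDeposits}(\alpha)$ by $x$.
\item \textsf{OutboxProcessing} and \textsf{SignGSM} touch neither $\mathcal{L}$, $\mathcal{I}$ nor the deposit/withdraw counts.
\end{itemize}

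The main obstacle is the bookkeeping for finite-map sums in \textsf{InternalTransfer}, especially when $u_s=u_r$ or when $u_r$ is a fresh key absent from the map. We would handle this with a lemma $\sum_u \mathrm{update}(\mathcal{L},v,f)[v] = \sum_u\mathcal{L}[u] - \mathcal{L}[v] + f(\mathcal{L}[v])$, valid for any key $v$ (present or default $0$), and apply it twice.

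A second subtlety is that withdrawals are subtracted at \textsf{Withdraw} time rather than at broadcast. If one prefers the "realized withdrawals" reading of Proposition~3, we would add $\sum_{\mathcal{O}} x$ to the left side as a further nonnegative term. The invariant stays inductive, and the inequality still follows.
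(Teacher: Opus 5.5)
Your proposal is correct and shares the paper's skeleton: induction over the transitions of Figure~\ref{fig:pass-symbolic-create}, case-splitting on deposit, claim, transfer and withdraw. But you induct on a stronger invariant, and the difference is substantive.

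The paper inducts on the bare inequality $\sum_{u}\mathcal{L}[u][\alpha]\le\mathtt{externalDeposits}(\alpha)$. In the claim case it merely asserts that $x$ ``was already accounted for'' but not yet allocated. That does not follow from the bare hypothesis. What is actually needed is that the claimed $x$ still sits in $\mathcal{I}$ and therefore lies in the slack, and that is precisely the Inbox term you add.

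The paper also leaves $\mathtt{externalDeposits}(\alpha)$ unchanged on \textsf{Withdraw}, effectively bounding by gross deposits, and has no \textsf{OutboxProcessing} case. This keeps its withdraw step trivial but does not match the ``net'' wording of the statement. Your reading subtracts withdrawals at request time, or at broadcast once you add the Outbox mass. Either way it yields a conservation equality from which the inequality drops out.

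Two further points you cover that the paper skips:
\begin{itemize}
\item The root GSM endowment: if GSM rights lived in $\mathcal{L}$, it would break the paper's all-zero base case.
\item The case $u_s=u_r$ in \textsf{InternalTransfer}.
\end{itemize}

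The paper's version buys brevity. Yours buys an invariant that is genuinely inductive and mechanizable, at the cost of the finite-map and multiset sum lemmas.

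One small caveat: the figure adds deposits by set union, so with repeatable deposit ids your equality weakens to $\le$. That still suffices for the theorem. Reading $\mathcal{I}$ as a multiset, as Section~\ref{sec:design-highlevel} does, removes the issue.
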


\begin{proof}
We model the system’s state $(\pk, \sk, \nu, \mathcal{I}, \mathcal{O}, \mathcal{L}, \mathcal{H})$ as in Figure~\ref{fig:pass-symbolic-create}, including the Inbox $\mathcal{I}$, Outbox $\mathcal{O}$, and the internal ledger $\mathcal{L}$. We define a \emph{provenance record} $P(\alpha, u)$ that captures the total amount of asset $\alpha$ subaccount $u$ is authorized to hold, based on Inbox claims and internal transfers.

\paragraph{Base Case:}
Immediately after \textbf{CreatePassWallet}, no assets have been deposited, and no subaccount balances exist:
\[
\forall\,u,\; \mathcal{L}[u][\alpha] = 0, 
\quad
\forall\,\alpha,\;\mathtt{externalDeposits}(\alpha) = 0.
\]
Hence,
\[
\sum_{u \in U} \mathcal{L}[u][\alpha]
\;=\;
0
\;\;\le\;\;
0
\;=\;
\mathtt{externalDeposits}(\alpha).
\]

\paragraph{Inductive Step:}
Assume the invariant holds after $n-1$ operations. We prove it remains valid after the $n$th operation, for each possible type of operation:

\begin{enumerate}[label=(\arabic*), leftmargin=*]
\item \textbf{External Deposit:}  
  Suppose an amount $x$ of asset $\alpha$ is deposited into the wallet:
  \[
    \mathtt{externalDeposits}(\alpha)
    \;\gets\;
    \mathtt{externalDeposits}(\alpha) + x.
  \]
  No subaccount balances $\mathcal{L}[u][\alpha]$ change. Thus, the left side of our inequality (sum of subaccount balances) is unchanged, while the right side strictly increases by $x$, so the invariant continues to hold.

\item \textbf{Inbox Claim:}  
  If subaccount $u$ claims amount $x$ of asset $\alpha$ from the Inbox $\mathcal{I}$:
  \[
    \mathcal{L}[u][\alpha] 
    \;\gets\; 
    \mathcal{L}[u][\alpha] + x,
    \quad
    P(\alpha, u) 
    \;\gets\;
    P(\alpha, u) + x.
  \]
  Since $x$ was already accounted for in
  $\mathtt{externalDeposits}(\alpha)$ (but not yet allocated to any
  subaccount), the total subaccount balances remain bounded by
  $\mathtt{externalDeposits}(\alpha)$.

\item \textbf{Internal Transfer:}  
  If subaccount $u_1$ transfers $x$ of $\alpha$ to subaccount $u_2$:
  \[
    \mathcal{L}[u_1][\alpha] 
    \;\gets\;
    \mathcal{L}[u_1][\alpha] - x,
    \quad
    \mathcal{L}[u_2][\alpha] 
    \;\gets\;
    \mathcal{L}[u_2][\alpha] + x,
  \]
  and $P(\alpha, u_2)$ increases by $x$. Because the sum of all subaccount balances does not change, our invariant is preserved.

\item \textbf{Withdrawal:}  
  If subaccount $u$ withdraws $x$ of $\alpha$:
  \[
    \mathcal{L}[u][\alpha] 
    \;\gets\;
    \mathcal{L}[u][\alpha] - x,
  \]
  which reduces the total subaccount balances. Since $\mathtt{externalDeposits}(\alpha)$ remains unchanged (assets have simply exited the wallet), the inequality still holds.
\end{enumerate}

Thus, by induction over all possible operations, 
\[
\sum_{u \in U} \mathcal{L}[u][\alpha]
\;\;\le\;\;
\mathtt{externalDeposits}(\alpha)
\]
for every reachable state. No subaccount can create new assets out of nothing, and all assets in the system are fully accounted for by provenanced deposits.
\end{proof}


%% file: Sections/A-dstack-system.tex
\section{Dstack Architecture and Execution Model}\label{sec:dstack}

\subsection{Component Overview}
\begin{itemize}
    \item \textbf{Dstack-OS:} Minimal Linux-based VM image with a verifiable boot chain. Measures bootloader, kernel, root filesystem (RootFs) via MRTD and RTMR registers. Integrates \textsf{dm-verity} for data integrity and monotonic counters for anti-rollback.
    \item \textbf{Dstack-KMS:} Blockchain-governed MPC network running in TEE nodes for key derivation and rotation. Provides:
    \begin{align*}
        k_c &= \textsf{HMAC}_{k_{\mathrm{root}}}(c_{\mathrm{id}})
    \end{align*}
    Rotation is triggered via governance ($t$-of-$n$ threshold).
    \item \textbf{Dstack-Gateway/Ingress:} TLS termination inside TEEs.
    Its ZeroTrust-TLS protocol binds $\mathcal{X}_c$ to $\mathcal{R}(c)$,
    enabling verifiable HTTPS ingress.
\end{itemize}

\subsection{Governance Layer}
Two contracts form the trust base:
\begin{itemize}
    \item \emph{DstackKms}: Global registry of KMS nodes and application digests.
    \item \emph{DstackApp}: Per-application contract specifying quorum rules, allowed image hashes, and authorized enclaves.
\end{itemize}

\subsection{Workflow Phases}
\begin{enumerate}
    \item \textbf{Attestation:} Enclave $e$ produces quote $Q_e$; verified against $\mathcal{R}$.
    \item \textbf{Secret Provisioning:} MPC nodes compute $k_c$, release to $e$ if $Q_e$ is valid.
    \item \textbf{Container Launch:} Dstack-OS mounts RootFs, decrypts state $\sigma_c$ with $k_c$.
    \item \textbf{Ingress Binding:} TLS certificate $\mathcal{X}_c$ generated inside $e$; bound to $\mathcal{R}(c)$.
    \item \textbf{Audit Logging:} Migration, key rotation, and governance events appended to $\mathcal{H}$.
\end{enumerate}

\subsection{Fault Tolerance}
\begin{itemize}
    \item Vendor failure: $\mu$ enables re-deployment across Intel/AMD TEEs.
    \item KMS compromise: MPC-threshold key rotation restores
    forward and backward secrecy.
    \item Governance deadlock: timeout fallback in $\Gamma$
    ensures quorum reduction and forward progress.
\end{itemize}

\subsection{Liveness Proof Sketch}
Combining MPC-based key recovery, quorum fallback, and portable containers yields:
\[
\forall \textsf{op}\in\mathcal{O}_c,\quad \Pr[\textsf{Exec}(c,\textsf{op})=\top]=1
\]
under assumptions of $f$-resilient MPC nodes and at least one uncompromised TEE KMS instance.

%% file: Sections/A-benchmark-results.tex
\section{Detailed Benchmark Results}\label{appendix:benchmark-results}

This appendix provides comprehensive benchmark statistics for both AWS Nitro Enclaves and Intel TDX environments. All operations are internal \sysname state changes that occur within the TEE with no blockchain network interaction. Each operation was executed 100 times per trial, with 10 independent trials per environment. The tables below show summary statistics including mean, standard deviation, and range.

\subsection{AWS Nitro Enclaves Results}

\begin{table}[H]
\centering
\caption{AWS Nitro Enclaves benchmark statistics (ops/sec) across 10 trials}
\label{tab:nitro-detailed}
\small
\begin{tabular}{lrrrr}
\toprule
\textbf{Operation} & \textbf{Mean} & \textbf{Std Dev} & \textbf{Min} & \textbf{Max} \\
\midrule
Wallet Creation & 6,531 & 531 & 5,446 & 7,042 \\
Account Balance Queries & 530,943 & 675 & 530,216 & 531,915 \\
Inbox-to-Subaccount Claims & 14,856 & 476 & 14,326 & 15,504 \\
Internal Subaccount Transfers & 32,953 & 1,322 & 31,447 & 35,088 \\
Withdrawal Request Creation & 18,291 & 1,254 & 15,873 & 19,608 \\
Transaction History Queries & 3,757 & 45 & 3,650 & 3,802 \\
End-to-End Workflow & 35,663 & 6,234 & 21,739 & 40,650 \\
\midrule
Batch Deposit \& Claim & 481 & 13 & 467 & 497 \\
Multi-threaded Operations (5 threads) & 32,189 & 884 & 31,200 & 34,364 \\
\bottomrule
\end{tabular}
\end{table}

\subsection{Intel TDX Results}

\begin{table}[H]
\centering
\caption{Intel TDX benchmark statistics (ops/sec) across 10 trials}
\label{tab:dstack-detailed}
\small
\begin{tabular}{lrrrr}
\toprule
\textbf{Operation} & \textbf{Mean} & \textbf{Std Dev} & \textbf{Min} & \textbf{Max} \\
\midrule
Wallet Creation & 14,970 & 4,652 & 8,696 & 24,390 \\
Account Balance Queries & 935,428 & 338,450 & 383,142 & 1,470,588 \\
Inbox-to-Subaccount Claims & 33,064 & 5,197 & 25,445 & 41,667 \\
Internal Subaccount Transfers & 71,292 & 17,623 & 47,170 & 101,010 \\
Withdrawal Request Creation & 39,702 & 7,434 & 29,155 & 51,813 \\
Transaction History Queries & 7,685 & 872 & 6,452 & 8,929 \\
End-to-End Workflow & 78,433 & 17,403 & 59,524 & 110,497 \\
\midrule
Batch Deposit \& Claim & 724 & 81 & 636 & 893 \\
Multi-threaded Operations (5 threads) & 55,124 & 16,744 & 28,329 & 81,301 \\
\bottomrule
\end{tabular}
\end{table}

\subsection{Performance Analysis}

The statistical analysis reveals several key insights about \sysname's internal operation performance:

\begin{itemize}
    \item \textbf{Performance Consistency}: AWS Nitro Enclaves demonstrate superior consistency with lower standard deviations across all operations. The coefficient of variation ranges from 0.1\% (Account Balance Queries) to 17.5\% (End-to-End Workflow).
    \item \textbf{Peak Performance}: Intel TDX achieves higher mean throughput but with significantly greater variance. Standard deviations are 10-500x larger than Nitro, indicating less predictable performance.
    \item \textbf{Production Considerations}: These internal operation benchmarks demonstrate \sysname's ability to handle thousands of wallet operations per second within the TEE, with the actual deployment bottleneck being blockchain transaction submission rather than internal processing.
\end{itemize}

The statistical summary confirms that both platforms support practical \sysname deployment, with Nitro offering predictable performance and Intel TDX providing higher peak throughput at the cost of consistency.

%% file: Sections/A-design-extensions.tex
\section{Design Extensions}\label{sec:design-extensions}

\subsection{Composability and Inter\sysname Transactions}

While \sysname provides privacy and provenance guarantees for a single wallet instance, 
an important question is how multiple \sysname wallets may interact. This is important in supply-chain use cases where each \sysname represents a single company in the supply chain. 

Trivially, \sysname A can send an on-chain transaction from its Outbox to \sysname B's EOA account, and this will land in the Inbox of \sysname B, ready to claim. However, this undermines the privacy guarantees that \sysname offers at the EOA level, and is undesirable in cases where inter-organizational asset transfers contain sensitive information, such as order details.

We propose a non-trivial approach inspired by cross-Payment Channel Network (PCN) protocols such as XTransfer \cite{aumayr2025xtransfer}. We can treat each \sysname instance akin to a PCN hub node, with state $\mathcal{L}$ and transaction history $\mathcal{H}$. As with XTransfer, we can assume a star topology of \sysname nodes, with a central \sysname Hub in communciation with all \sysname nodes. The \sysname Hub will aggregate all Inter\sysname transactions $\mathcal{T} := \{(a_i, x_i, s_i, r_i)\}_{i=1}^n$, where $(a_i, x_i, s_i, r_i)$ represent the asset, amount, sender, and receiver, each signed by the $s_i$ Outbox.

The Hub aggregates these signed payment intents, computes a netted settlement plan $\mathcal{S}$, and coordinates a secure multi-party update protocol to commit the resulting balances, similar to Thora \cite{cryptoeprint:thora2022}. In the optimistic case, this process completes entirely off-chain, and each participant’s balance is updated in the \sysname ledger without revealing the underlying transaction graph. If a dispute or failure occurs, any participant may trigger an on-chain enforcement transaction that finalizes $\mathcal{S}$ atomically via the direct transfer mechanism, thus guaranteeing safety.

This approach leverages \sysname’s provenance and state design to make each instance an atomic node in PCN-style systems like X-Transfer. Here, we outline how \sysname Hubs can enable confidential, auditable cross-organization transfers, while full formal specification and protocol implementation is deferred for future work.

\subsection{Instantiation on zkVM}
The design of \sysname presented thus far relies on trusted execution environments (TEEs) such as AWS Nitro or Intel TDX to enforce provenance-based control. TEEs provide strong isolation and efficient execution, allowing internal transfers to remain private while ensuring that every outbox action has a verifiable lineage. However, the trust model is anchored in hardware attestation: users and regulators must accept that the enclave executes the correct code as claimed. While sufficient for many organizational custody scenarios, this assumption limits transparency and long-term auditability. To address this limitation, we explore an alternative instantiation of \sysname in a zero-knowledge virtual machine (zkVM), replacing hardware trust with cryptographic proofs of correctness.

Zero-knowledge virtual machines (zkVMs) provide a proving environment in which arbitrary state machines can run with strong correctness guarantees~\cite{arun2023zkvm}. By producing a succinct proof $\pi$ of execution, zkVMs offer integrity assurances similar to TEEs but without reliance on vendor-specific hardware. This shift makes zkVMs particularly suitable for contexts such as institutional custody, multi-party supply chains, or inter-organizational accounting where external verifiability and regulator-grade auditability are paramount.

In a zkVM instantiation of \sysname, the execution pipeline mirrors the enclave-based design but replaces attestation with cryptographic proofs. The process proceeds in four stages:
\begin{enumerate}
    \item \textbf{Batch formation.} Transaction requests (claims, transfers, withdrawals) are first queued off-chain into a batch.
    \item \textbf{zkVM execution.} The zkVM program takes as private inputs: (i) the prior batch’s commitments of the ledger $\mathcal{L}$ and provenance log $\mathcal{H}$, (ii) the queued transitions, including inbox $\mathcal{I}$ deposits and outbox $\mathcal{O}$ withdrawals, and (iii) the proposed new commitments $(\mathcal{L}', \mathcal{H}')$. Inside the zkVM, the transition function is replayed deterministically: starting from the previous roots, applying all operations in sequence, updating balances and provenance entries, and checking consistency constraints.
    \item \textbf{Proof generation.} The zkVM outputs a succinct proof $\pi$ certifying that the claimed new roots $(\mathcal{L}', \mathcal{H}')$ are correct with respect to the given transition set.
    \item \textbf{On-chain verification.} A minimalist verifier contract validates $\pi$, updates the recorded commitments, and enforces any pending outbox operations. This ensures that every external action has a verifiable provenance path while internal transfers remain private.
\end{enumerate}

Compared to TEEs, zkVMs strengthen the verifiability model by eliminating hardware trust assumptions, but incur higher computational and financial costs due to proof generation and on-chain verification. Batching amortizes these costs, raising open research questions about optimal batch sizes, data availability strategies, and proof system selection. zkVM backends such as Risc0, SP1, or Plonky3 offer different tradeoffs in proof size, verification cost, and latency. Hybrid deployments are also possible: TEEs can serve as fast-path executors for intra-day operations, with zkVM proofs generated periodically to provide cryptographic audit trails. This hybrid pattern resembles optimistic rollups, where fast execution is combined with verifiable settlement.

Finally, this instantiation aligns closely with the Chainless stack architecture~\cite{seong2025chainlessappsmodularframework}, which emphasizes the separation of execution, trust, bridging, and settlement. In this view, zkVM proofs naturally inhabit the execution and trust layers, while Inbox–Outbox commitments integrate with bridging and the verifier contract finalizes settlement. This layered design opens the door to modular, interoperable deployments of \sysname across heterogeneous chains and domains. Full formal specification, performance benchmarking, and exploration of hybrid zkVM--TEE implementations remain important avenues for future work.

\begin{figure}[H]
  \centering
  \includegraphics[width=0.5\linewidth]{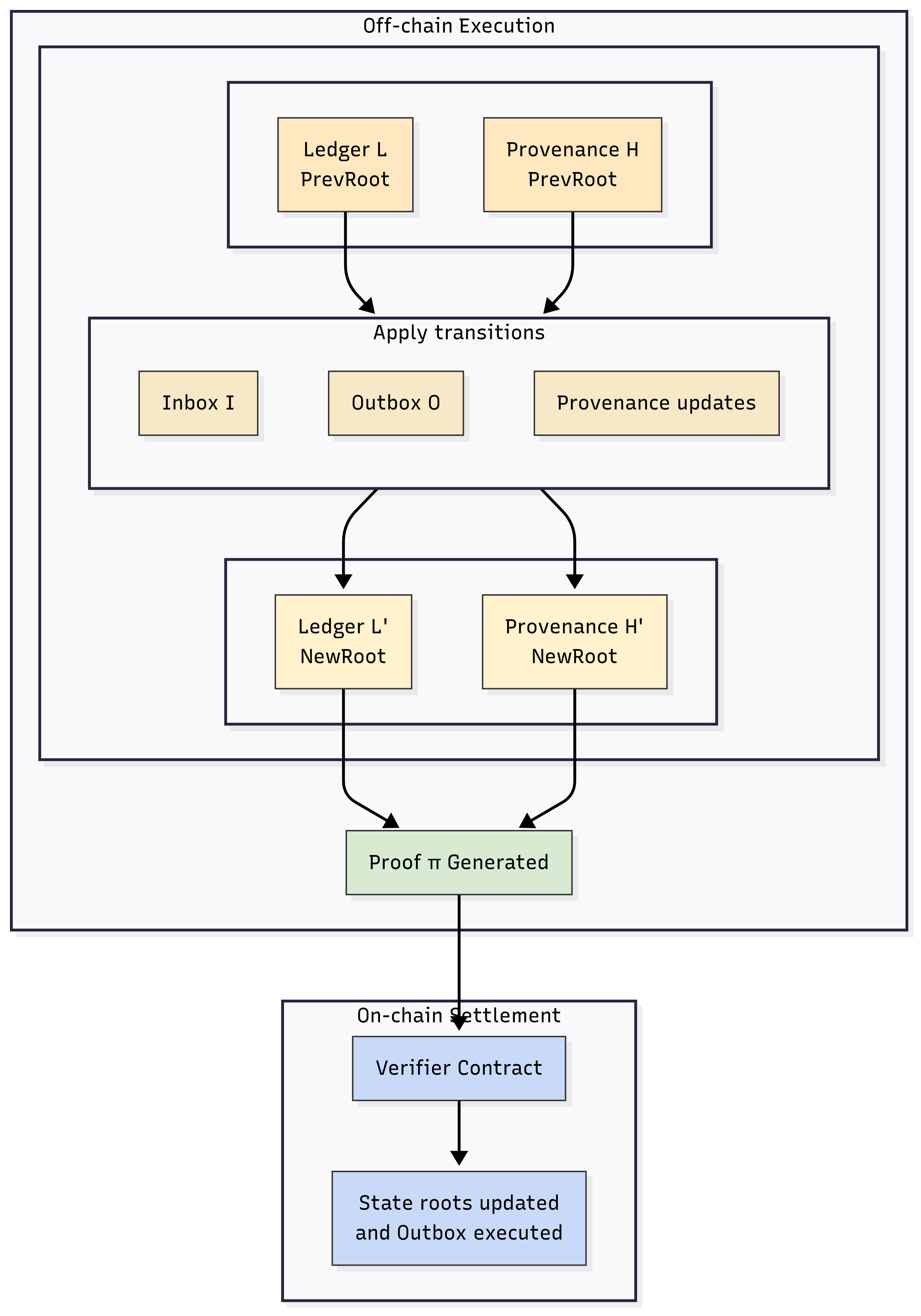}
  \caption{PASS instantiation inside a zkVM: Inbox $\mathcal{I}$ and Outbox $\mathcal{O}$ transitions are applied to the Ledger $\mathcal{L}$ and Provenance $\mathcal{H}$, producing new state commitments. The zkVM outputs a proof $\pi$, which the on-chain verifier checks before updating roots and enforcing outbox operations.}
  \label{fig:pass-zkvm}
\end{figure}

%% file: references.bib
@misc{tdxperformance,
  title={{Performance Considerations of Intel® Trust Domain Extensions on 4th Generation Intel® Xeon® Scalable Processors}},
  author={Intel},
  year={2025},
  url={https://www.intel.com/content/www/us/en/developer/articles/technical/trust-domain-extensions-on-4th-gen-xeon-processors.html}
}

@misc{zhou2025dstackzerotrustframework,
      title={{Dstack: A Zero Trust Framework for Confidential Containers}}, 
      author={Shunfan Zhou and Kevin Wang and Hang Yin},
      year={2025},
      eprint={2509.11555},
      archivePrefix={arXiv},
      primaryClass={cs.CR},
      url={https://arxiv.org/abs/2509.11555}, 
}

@article{walters2025eliza,
  title     = {Eliza: A Web3 Friendly AI Agent Operating System},
  author    = {Shaw Walters and Sam Gao and Shakker Nerd and Feng Da and Warren Williams and Ting-Chien Meng and Hunter Han and Frank He and Allen Zhang and Ming Wu and Timothy Shen and Maxwell Hu and Jerry Yan},
  journal   = {arXiv preprint arXiv:2501.06781},
  year      = {2025},
  url       = {https://arxiv.org/pdf/2501.06781v1}
}

@manual{lean_documentation,
  title     = {Lean Documentation},
  author    = {{Lean Community}},
  year      = {2025},
  url       = {https://lean-lang.org/documentation/}
}

@manual{siwe_eip4361,
  title     = {EIP-4361: Sign-In with Ethereum},
  author    = {{Sign-In with Ethereum Documentation}},
  year      = {2023},
  url       = {https://docs.login.xyz/general-information/siwe-overview/eip-4361}
}

@manual{snapshot_voting,
  title     = {Voting on Snapshot},
  author    = {{Snapshot Documentation}},
  year      = {2023},
  url       = {https://docs.snapshot.box/user-guides/voting/vote}
}

@manual{erc4337_user_operation,
  title     = {UserOperation – ERC-4337 Documentation},
  author    = {{ERC-4337 Documentation}},
  year      = {2023},
  url       = {https://www.erc4337.io/docs/understanding-ERC-4337/user-operation}
}

@misc{pertsev2019tornado,
  title     = {Tornado Cash Privacy Solution Version 1.4},
  author    = {Alexey Pertsev and Roman Semenov and Roman Storm},
  year      = {2019},
  url       = {https://berkeley-defi.github.io/assets/material/Tornado%20Cash%20Whitepaper.pdf}
}

@article{safe2024eip7702,
  title     = {EIP-7702: A Win for Smart Accounts in Ethereum’s Pectra Upgrade?},
  author    = {Valentin Seehausen},
  journal   = {Safe Global Blog},
  year      = {2024},
  url       = {https://safe.global/blog/eip-7702-smart-accounts-ethereum-pectra-upgrade}
}

@manual{venly_omnibus_vs_segregated_wallets,
  title     = {Omnibus vs Segregated Wallets in Crypto},
  author    = {{Venly}},
  year      = {2025},
  url       = {https://docs.venly.io/docs/omnibus-vs-segregated-wallets-in-crypto}
}

@article{cointelegraph2024smart,
  title     = {What Are Smart Contract Wallets?},
  author    = {Dilip Kumar Patairya},
  journal   = {Cointelegraph},
  year      = {2024},
  url       = {https://cointelegraph.com/explained/what-are-smart-contract-wallets}
}

@manual{argent_docs,
  title     = {Build with Argent},
  author    = {{Argent}},
  year      = {2023},
  url       = {https://docs.argent.xyz/}
}

@manual{loopring_wallet_docs,
  title     = {Loopring Smart Wallet Documentation},
  author    = {{Loopring}},
  year      = {2023},
  url       = {https://docs-wallet.loopring.io/}
}

@manual{tally_gnosis_safe,
  title     = {Gnosis Safe Overview},
  author    = {{Tally}},
  year      = {2025},
  url       = {https://docs.tally.xyz/set-up-and-technical-documentation/using-governor-with-gnosis-safe/gnosis-safe}
}

@article{forbes_bybit_hack,
  title     = {Breaking: Could {Bybit}'s \$1.5{B} Hack Have Been Stopped? {Ledger}, {CZ} React},
  author    = {Nina Bambysheva},
  journal   = {Forbes},
  year      = {2025},
  url       = {https://www.forbes.com/sites/digital-assets/2025/02/22/breaking-could-bybits-14b-hack-have-been-stopped-ledger-cz-react/}
}

@manual{wormhole_guardians,
  title     = {Guardians},
  author    = {{Wormhole Foundation}},
  year      = {2025},
  url       = {https://wormhole.com/docs/learn/infrastructure/guardians/}
}

@inproceedings{mangipudi2023uncovering,
  title     = {Uncovering Impact of Mental Models towards Adoption of Multi-device Crypto-Wallets},
  author    = {Easwar Vivek Mangipudi and Umang Desai and Mohammad Minaei and Mainack Mondal and Aniket Kate},
  booktitle = {Proceedings of the 2023 ACM SIGSAC Conference on Computer and Communications Security},
  pages     = {3153--3167},
  year      = {2023},
  doi       = {10.1145/3576915.3623218},
  url       = {https://dl.acm.org/doi/10.1145/3576915.3623218}
}

@article{goldmansachs2021crypto,
  title     = {Crypto: A New Asset Class?},
  author    = {{Goldman Sachs Research}},
  year      = {2021},
  url       = {https://www.goldmansachs.com/insights/top-of-mind/crypto-a-new-asset-class}
}

@ARTICLE{rollupsurvey2022,
  author={Thibault, Louis Tremblay and Sarry, Tom and Hafid, Abdelhakim Senhaji},
  journal={IEEE Access}, 
  title={Blockchain Scaling Using Rollups: A Comprehensive Survey}, 
  year={2022},
  volume={10},
  number={},
  pages={93039-93054},
  doi={10.1109/ACCESS.2022.3200051}}

@manual{safe_docs_multisig,
    title = {{How do Safe Smart Accounts work?}},
    author = {{Safe}},
    year = {2025},
    url = {https://docs.safe.global/advanced/smart-account-concepts}
}

@misc{fireblocks_governance_policy_engine,
  title     = {Governance and Policy Engine},
  author    = {{Fireblocks}},
  year      = {2025},
  url       = {https://www.fireblocks.com/platforms/governance-and-policy-engine/}
}

@misc{turnkey_policy_quickstart,
  title     = {Policy Quickstart},
  author    = {{Turnkey}},
  year      = {2025},
  url       = {https://docs.turnkey.com/concepts/policies/quickstart}
}

@article{cutler2024cedar,
  title     = {Cedar: A New Language for Expressive, Fast, Safe, and Analyzable Authorization},
  author    = {Joseph W. Cutler and Craig Disselkoen and Aaron Eline and Shaobo He and Kyle Headley and Michael Hicks and Kesha Hietala and Eleftherios Ioannidis and John Kastner and Anwar Mamat and Darin McAdams and Matt McCutchen and Neha Rungta and Emina Torlak and Andrew M. Wells},
  journal   = {arXiv preprint arXiv:2403.04651},
  year      = {2024},
  url       = {https://arxiv.org/pdf/2403.04651}
}

@article{Austgen2024Liquefaction,
  author    = {James Austgen and Andr{\'e}s F{\'a}brega and Mahimna Kelkar and Dani Vilardell and Sarah Allen and Kushal Babel and Jay Yu and Ari Juels},
  title     = {{Liquefaction}: Privately Liquefying Blockchain Assets},
  journal   = {arXiv e-prints},
  volume    = {2412.02634},
  year      = {2024},
  note      = {arXiv:2412.02634 [cs.CR]}
}

@misc{oasis_protocol,
  author       = {{Oasis Labs}},
  title        = {Oasis Protocol: Privacy-Enabled Blockchain Platform},
  year         = {2024},
  url          = {https://oasisprotocol.org/},
  note         = {Accessed: 2024-10-06},
  howpublished = {\url{https://oasisprotocol.org/}}
}

@misc{schneider2022sokhardwaresupportedtrustedexecution,
      title={SoK: Hardware-supported Trusted Execution Environments}, 
      author={Moritz Schneider and Ramya Jayaram Masti and Shweta Shinde and Srdjan Capkun and Ronald Perez},
      year={2022},
      eprint={2205.12742},
      archivePrefix={arXiv},
      primaryClass={cs.CR},
      url={https://arxiv.org/abs/2205.12742}, 
}

@inproceedings{mckeen2013innovative,
	title={Innovative instructions and software model for isolated execution.},
	author={McKeen, Frank and Alexandrovich, Ilya and Berenzon, Alex and Rozas, Carlos V and Shafi, Hisham and Shanbhogue, Vedvyas and Savagaonkar, Uday R},
	booktitle={HASP},
	year={2013},
	pages={10}
}

@misc{ethereum-daos,
  author = {Ethereum.org Contributors},
  title = {Decentralized autonomous organizations ({DAO}s)},
  howpublished = {\url{https://ethereum.org/en/dao/}},
  note = {Accessed: 2024-11-13}
}

@misc{buterin2014ethereum,
  author = {Vitalik Buterin},
  title = {Ethereum: A Next-Generation Smart Contract and Decentralized Application Platform},
  year = {2014},
  howpublished = {Ethereum whitepaper, \url{https://ethereum.org/en/whitepaper/}},
  note = {Accessed: 2024-10-16}
}

@misc{aumayr2025xtransfer,
      author = {Lukas Aumayr and Zeta Avarikioti and Iosif Salem and Stefan Schmid and Michelle Yeo},
      title = {X-Transfer: Enabling and Optimizing Cross-{PCN} Transactions},
      howpublished = {Cryptology {ePrint} Archive, Paper 2025/272},
      year = {2025},
      url = {https://eprint.iacr.org/2025/272}
}

@misc{cryptoeprint:thora2022,
      author = {Lukas Aumayr and Kasra Abbaszadeh and Matteo Maffei},
      title = {Thora: Atomic and Privacy-Preserving Multi-Channel Updates},
      howpublished = {Cryptology {ePrint} Archive, Paper 2022/317},
      year = {2022},
      doi = {10.1145/3548606.3560556},
      url = {https://eprint.iacr.org/2022/317}
}

@misc{seong2025chainlessappsmodularframework,
      title={Chainless Apps: A Modular Framework for Building Apps with Web2 Capability and Web3 Trust}, 
      author={Brian Seong and Paul Gebheim},
      year={2025},
      eprint={2505.22989},
      archivePrefix={arXiv},
      primaryClass={cs.CR},
      url={https://arxiv.org/abs/2505.22989}, 
}

@misc{arun2023zkvm,
      author = {Arasu Arun and Srinath Setty and Justin Thaler},
      title = {Jolt: {SNARKs} for Virtual Machines via Lookups},
      howpublished = {Cryptology {ePrint} Archive, Paper 2023/1217},
      year = {2023},
      url = {https://eprint.iacr.org/2023/1217}
}

@misc{fincen2020ThresholdRule,
  title        = {Threshold for the Requirement To Collect, Retain, and Transmit Information on Funds Transfers and Transmittals of Funds That Begin or End Outside the United States, and Clarification of the Requirement To Collect, Retain, and Transmit Information on Transactions Involving Convertible Virtual Currencies and Digital Assets With Legal Tender Status},
  author       = {{Financial Crimes Enforcement Network} and {Department of the Treasury}},
  institution  = {Federal Register},
  year         = {2020},
  month        = oct,
  day          = {27},
  volume       = {85},
  number       = {20/23756},
  pages        = {68005--68019},
  address      = {Washington, D.C.},
  note         = {Proposed Rule, Docket No. FINCEN-2020-0002; RIN 1506-AB41, 31 CFR Parts 1010, 1020},
  url          = {https://www.federalregister.gov/documents/2020/10/27/2020-23756/threshold-for-the-requirement-to-collect-retain-and-transmit-information-on-funds-transfers-and}
}

@misc{helios_lightclient,
  author       = {{a16z crypto}},
  title        = {Helios: Ethereum Light Client},
  howpublished = {\url{https://github.com/a16z/helios}},
  note         = {GitHub repository, accessed: 2025-09-14},
  year         = {2025}
}

@misc{patlan2025aiagentscryptoland,
      author = {Atharv Singh Patlan and Peiyao Sheng and S. Ashwin Hebbar and Prateek Mittal and Pramod Viswanath},
      title = {{AI} Agents in Cryptoland: Practical Attacks and No Silver Bullet},
      howpublished = {Cryptology {ePrint} Archive, Paper 2025/526},
      year = {2025},
      url = {https://eprint.iacr.org/2025/526}
}

@techreport{auer2023bankingbis,
  author       = {Raphael Auer and Marc Farag and Ulf Lewrick and Lovrenc Orazem and Markus Zoss},
  title        = {Banking in the shadow of bitcoin? The institutional adoption of cryptocurrencies},
  institution  = {Bank for International Settlements},
  type         = {BIS Working Papers},
  number       = {1013},
  year         = {2023},
  url          = {https://www.bis.org/publ/work1013.htm}
}

@inproceedings{brunner2021keystorage,
  author       = {Brunner, Dominik and Karame, Ghassan O. and Li, Wenting and Tschorsch, Florian},
  title        = {Who Stores the Private Key? An Exploratory Study about User Preferences of Key Management for Blockchain-based Applications},
  booktitle    = {Proceedings of the 7th International Conference on Information Systems Security and Privacy (ICISSP)},
  pages        = {276--287},
  year         = {2021},
  publisher    = {SciTePress},
  doi          = {10.5220/0010226102760287},
  url          = {https://publications.ait.ac.at/en/publications/who-stores-the-private-key-an-exploratory-study-about-user-prefer}
}

@article{yu2024walletchoices,
  author       = {Yu, Yijun and Chang, Minsi and Reijers, Wessel and McAuley, Derek},
  title        = {How Cryptocurrency Users Choose and Secure Their Wallets},
  journal      = {Proceedings of the ACM on Human-Computer Interaction},
  volume       = {8},
  number       = {CSCW2},
  pages        = {1--27},
  year         = {2024},
  publisher    = {Association for Computing Machinery},
  doi          = {10.1145/3642534},
  url          = {https://dl.acm.org/doi/10.1145/3642534}
}

@article{moosavi2021lissy,
  author       = {Moosavi, Mahsa and Clark, Jeremy},
  title        = {Lissy: Experimenting with On-Chain Order Books},
  journal      = {arXiv preprint arXiv:2101.06291},
  year         = {2021},
  url          = {https://arxiv.org/abs/2101.06291}
}

@article{wallet_features_users2025,
  author       = {Kong, Bosol and Choi, Mideum and Shin, Jungwoo and Lee, Daeho},
  title        = {What Wallet Features Do Users Want for Their Cryptocurrencies? Conjoint Analysis of User Preferences in Cryptocurrency Wallets},
  journal      = {Applied Economics},
  year         = {2025},
  pages        = {1--15},
  doi          = {10.1080/00036846.2025.2536752},
  url          = {https://doi.org/10.1080/00036846.2025.2536752}
}

@article{erinle2024shared_custodial,
  author       = {Erinle, Y. and others},
  title        = {Shared-Custodial Wallet for Multi-Party Crypto-Asset Management},
  journal      = {Future Internet},
  volume       = {17},
  number       = {1},
  pages        = {7},
  year         = {2024},
  doi          = {10.3390/fi17010007},
  url          = {https://www.mdpi.com/1999-5903/17/1/7}
}
